\theoremstyle{definition}
\newtheorem{theorem}{\quad Theorem}
\newtheorem{lemma}{\quad Lemma}
\newtheorem{app_lemma}{\quad Lemma}[section]
\newtheorem{assumption}{\quad Assumption}
\theoremstyle{remark}
\newcommand{\superscript}[1]{\ensuremath{^{\textrm{#1}}}}
\def\sharedaffiliation{\end{tabular}\newline\begin{tabular}{c}}
\newfont{\mycrnotice}{ptmr8t at 7pt}
\newfont{\myconfname}{ptmri8t at 7pt}
\begin{document}
\title{A High Reliability Asymptotic Approach for Packet Inter-Delivery Time Optimization in Cyber-Physical Systems}
\numberofauthors{1}
\author{ 
\alignauthor Xueying Guo\superscript{*},
Rahul Singh\superscript{\dag},
P. R. Kumar\superscript{\dag},
Zhisheng Niu\superscript{*}
\sharedaffiliation
\affaddr{\superscript{*}Tsinghua National Laboratory for Information Science and Technology, Tsinghua University, P. R. China}\\
\affaddr{\superscript{\dag}Department of Electrical and Computer Engineering, Texas A\&M University, USA}\\
\email{guo-xy11@mails.tsinghua.edu.cn, \{rsingh1,prk\}@tamu.edu, niuzhs@tsinghua.edu.cn}
}
\maketitle

\begin{abstract}
In cyber-physical systems such as automobiles, 
measurement data from sensor nodes should be delivered to other consumer nodes such as actuators in a regular fashion. 
But, in practical systems over  unreliable media such as wireless, 
it is a significant challenge to guarantee small enough inter-delivery times for different clients with heterogeneous channel conditions and inter-delivery requirements. 
In this paper, we design scheduling policies aiming at satisfying the inter-delivery requirements of such clients. 
We formulate the problem as a risk-sensitive Markov Decision Process (MDP). Although the resulting  problem involves an infinite state space, we first prove that there is an equivalent MDP involving only a finite number of states. Then we prove the existence of a stationary optimal policy and establish an algorithm to compute it in a finite number of steps.

However, the bane of this and many similar problems is the resulting complexity, and, in an attempt to make fundamental progress, we further propose a new \textit{high reliability asymptotic} approach.  
In essence, this approach considers the scenario when the channel failure probabilities for different clients are of the same order, and asymptotically approach zero. We thus proceed to determine the asymptotically optimal policy: 
in a two-client scenario, we show that the asymptotically optimal policy is a ``modified least time-to-go" policy, which is intuitively appealing and easily implementable; in the general multi-client scenario, we are led to an SN policy, and we develop an algorithm of low computational complexity to obtain it. Simulation results show that the resulting policies perform well even in the pre-asymptotic regime with moderate failure probabilities.
\end{abstract}

\category{C.2.1}{Network Architecture and Design}{Wireless Communication}

\keywords{Wireless Sensor Networks; Scheduling; Packet Inter-Delivery Time; High Reliability Asymptotic Approach}

\section{Introduction}
Delay and throughput have long been regarded as important quality of service (QoS) metrics \cite{Xiong2011,Neely2005,Guo2013}. However, with the increasing deployment of real-time applications such as sensor networks, and surveillance applications over unreliable media such as wireless, guaranteeing small enough inter-delivery times between  packets becomes important \cite{atilla1,atilla2,Rahul2015,r2,r3}. As an example, consider an in-vehicle wireless sensor network illustrated in Fig. \ref{fg:car}.
In-vehicle wireless sensor networks have been drawing increasing attention recently since they can significantly reduce the costs, reduce weight of the wiring harness and hence increase fuel efficiency, and are extensible and scalable, as compared to the wired in-vehicle networks\cite{Elbatt2006,Tsai2007}.
Such a cyber-physical system features several (about a hundred) wireless sensor nodes monitoring processes such as temperature and pressure, and continually transmitting their measurements to controllers which then choose appropriate actuation signals. 
In these systems, one is allowed to control the arrival process since outdated packets containing old sensor measurements can be
replaced by newer packets.
Since a large gap between updates can lead to system instability, inter-delivery times of these packets is an important QoS metric. Different clients may have different channel
conditions and inter-delivery requirements, which further complicates the problem.

\begin{figure}[!t]
	\centering
	\includegraphics[width=0.3\textwidth]{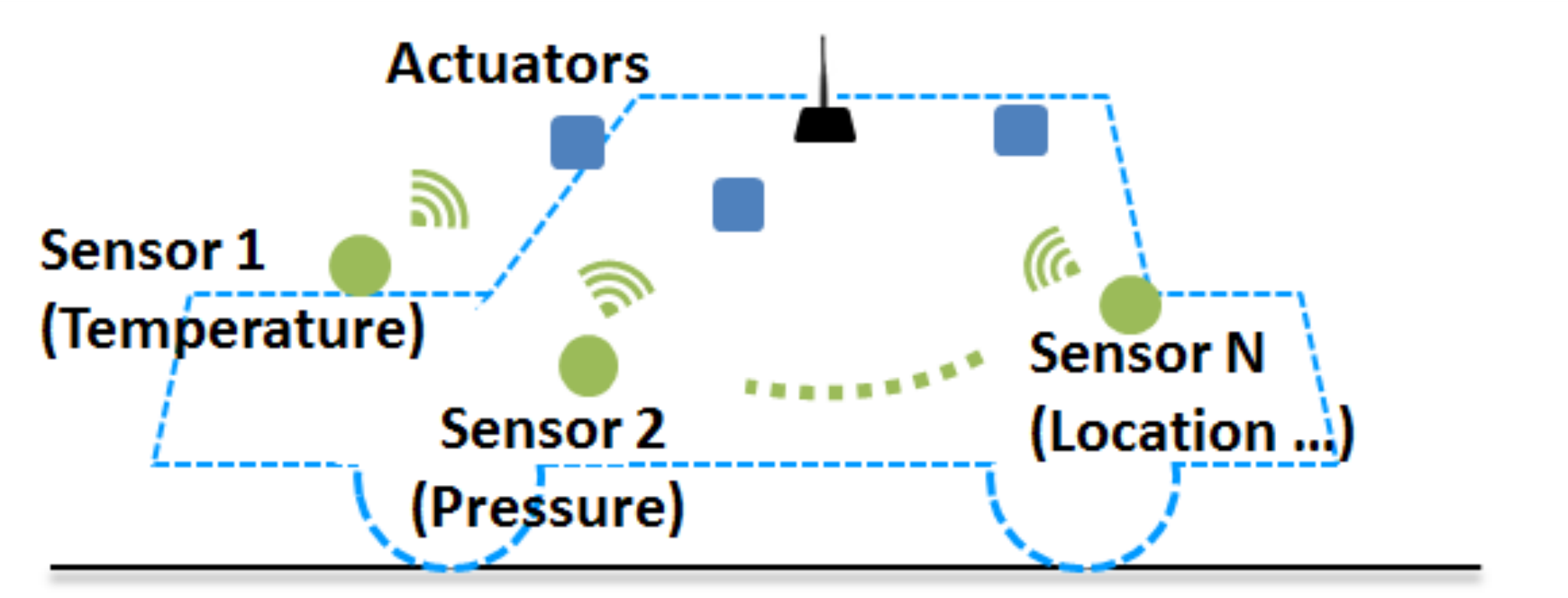}
	\caption{An in-vehicular network  with an access point and several wirelessly connected sensors and actuators.}
	\label{fg:car}
\end{figure}

In this paper, our goal is to design scheduling policies that decide which client's packet to transmit in each time slot in such systems, so as to guarantee small enough inter-delivery times between packets. 

To penalize severely the deviations in inter-delivery times that are larger than a certain threshold, we consider the exponential cost function, 
\begin{align*}
\mathrm{E}\left[\mathrm{exp}\left(\theta \cdot \sum_{n=1}^{N} (D^{(n)}-\tau_n)^+\right)\right], 
\end{align*}
where $\theta>0$ is a risk-aversion parameter. Here, $D^{(n)}$ is the inter-delivery time of client $n$, $\tau_n$ is a specified inter-delivery threshold for client $n$, and $(a)^+:=\max\{0,a\}$.

We formulate the optimization problem as a risk-sensitive Markov Decision Process (MDP) \cite{Howard1972,chung1987,Fleming1997,Marcus1997,Cadena1999,Jaquette1976}. Though it is over an infinite state space, we show that there is an equivalent MDP that involves only a finite state space. 
For this equivalent MDP, we then prove the existence of a stationary optimal policy, and obtain an algorithm which determines the optimal policy in a finite number of steps. 

The significant challenge of this and other modeling efforts is however the complexity of determining an optimal solution to the MDP, which is excessively large even for systems with a moderate number
of states (e.g., a hundred nodes), as is of interest in many applications. 
To address this critical challenge, we further propose a new approach to channel modeling which we call the ``high reliability asymptotic regime". 
In essence, this approach considers the scenario when the channel failure probabilities for different clients are of the same order, and asymptotically approach zero. We then proceed to determine the asymptotically optimal policy. 
Such a policy is expected to provide near optimal performance even when the channel failure probabilities are non-zero and range from small to moderate values. Philosophically, this approach can be regarded as similar to studying the high SNR asymptotics in network information theory \cite{avestimehr2011wireless}. 

In the case where there are two clients, the asymptotically policy has a very appealing structure, lending support to this approach.
The asymptotically optimal policy is a structurally clean   ``modified-least-time-to-go" policy, which is both intuitively appealing and easily implementable. 

Our interest, motivated by cyber-physical systems applications, is however in large systems with several sensors and actuators.
For this general multi-client scenario, 
the asymptotic approach leads to an SN policy, and an algorithm of relatively low computational complexity to obtain it. 
The success of such an asymptotic approach however depends on its performance in the pre-asymptotic regime. We present simulation results showing that this is indeed the case.

The rest of the paper is organized as follows. We review the related works in Section 2. We present the system model in Section 3. We formulate the problem as a risk-sensitive MDP in Section 4, and reduce it to a finite state problem in Section 5. 
We prove the existence of a stationary optimal policy, and apply the classic risk-sensitive MDP approach in Section 6. We propose the high reliability asymptotic approach in Section 7. In Section 8, we design asymptotically optimal policies by analyzing the high reliability asymptote. Simulation results are presented in Section 9, followed by conclusions in Section 10.

\section{Related Work}
Li, Eryilmaz and Li \cite{atilla1} and Li, Li and Eryilmaz \cite{atilla2} are apparently the first to consider inter-delivery time as a performance metric. 
These works analyzed this metric in the context of queueing systems, where the relevant trade-off is between stabilizing the queues and minimizing the sum of the inter-delivery times over all the clients. 
However, the situation is very different in wireless sensor networks where packets contain sensor measurements and one can simply replace older packets by newer packets, and thus resulting in no queues. 
Sadi and Ergen \cite{Sadi2013} have pointed out the periodic nature of sensor nodes in intra-vehicular wireless sensor networks.
In another relevant work, Singh, Guo, and Kumar \cite{Rahul2015} have addressed the issue of trading off higher throughput for better performance with respect to variations in the inter-delivery times. 
However, this work does not allow tunable and heterogeneous inter-delivery requirements, and a buffer is maintained so as to mitigate the influence of variations in inter-delivery times. 
Guo, Singh, Kumar, and Niu \cite{Guo2015_ICC} have further combined the inter-delivery time requirement with the system energy-efficiency.
Singh and Stolyar \cite{r1} have shown that the service process under the Max Weight scheduling is asymptotically smooth. 



The study of risk-sensitive MDPs dates back to Howard and Matheson  \cite{Howard1972}. 
There has been considerable work on proving the existence of stationary optimal policy in different conditions \cite{Fleming1997,Marcus1997,Cadena1999}. 
However, Chung and Sobel \cite{chung1987} have pointed out that, in contrast to the risk-neutral MDPs, even with a discounted cost, a stationary optimal policy need not exist for a general risk-sensitive MDP. 
This introduces an additional challenge to our study. 
In wireless communications, Altman et al \cite{Altman2011} have applied risk-sensitive MDP techniques to design power control strategies aiming at minimizing the delivery failure probability in delay tolerant networks.

Avestimehr et al \cite{avestimehr2011wireless} have proposed a deterministic channel model to study the high SNR asymptotics in the field of network information theory, thus obtaining constant gap approximations to the capacity of wireless networks. This has led to near-optimal and easy-to-implement communication schemes for Gaussian relay networks. 
Kittipiyakul et al \cite{Kittipiyakul2009} and Zhang et al \cite{Zhang2012} have also employed such an asymptotic approach to analyze error performance in fading channels. 


\section{System Model}\label{model}

Consider a system with $N$ wireless sensors and one access point (AP). Time is discretized  into slots.
The AP broadcasts a control message at the beginning of each time-slot to announce which sensor can transmit in the slot. The assigned sensor then transmits a packet. The size of a time slot is the time required for the AP to send the control message plus the time for a client to prepare and transmit a packet.
It is assumed that the wireless channel connecting sensor $n$ and the AP has a \emph{channel reliability} of $p_n\in(0,1)$, which can be taken to be the probability that the control message from the AP and the transmission from client $n$ are both successful. The system model can be generalized to take into account more general fading models.

The \textit{QoS} requirement for client $n$ is modeled through a specified value for the inter-delivery threshold $\tau_n$.
The cost incurred by the system in $T$ time-steps is modeled as,
\begin{align}\label{eq:cost define origninal}
\mathrm{E}\!\bigg[\!\mathrm{exp}\!\bigg(\!\theta\! \sum_{n=1}^N \!\! \sum_{i=1}^{M_T^{(n)}}\!(\!D_i^{(n\!)}\!\!-\!\tau_n)^+\!\! +\!(T\!-\!t_{\!D_{\!M_T^{(n)}}^{\!(n\!)}}\!\!\!-\tau_n)^+\!\! \bigg)\!\bigg]\!,
\end{align}
where $D_i^{(n)}$ is the time between the $(i\!-\!1)$-th and $i$-th packet deliveries of client $n$, $M_T^{(n)}$ is the number of packets delivered for the $n$-th client by time $T$, 
$t_{D_i^{(n)}}$ is the time slot in which the $i$-th packet for client $n$ is delivered, and $(a)^+:=\max\{a,0\}$. 
The last term is included since, otherwise, the policy of never making any transmission at all will result in the least cost.
The parameter $\theta>0$ leads to a risk-aversion problem. 
The goal of the AP is to decide the client to transmit in each time slot, in order to minimize the above cost.

\section{Problem Formulation}\label{se:problem formulation}
We first describe the notations used: Vectors will be in bold font, e.g., $\boldsymbol{\tau}: =\left(\tau_1,\ldots,\tau_N\right)$ and $\mathbf{x}:=(x_1,\cdots,x_N)$. Define $a_n\wedge b_n:=\min\{a_n,b_n\}$, and $\mathbf{a}\wedge \mathbf{b}: = \left(a_1 \wedge b_1,\ldots, a_N\wedge b_N\right)$.

The system \textit{state} at time $t$ is denoted as
$$X(t):= \left(X_1(t),\ldots,X_N(t)\right),$$ where $X_n(t)$ is the time 
elapsed since the 
most recent packet delivered by client $n$. Thus, the \textit{state space} is $\{0,1,\cdots\}^N$, which is finite for the finite time horizon problem, but exponentially growing to infinity as the horizon increases. 
Let \textit{control} $U(t)$ denote the client transmitting
in time slot $t$.
The system state evolves as, 
\begin{align*}
X_n(t+1) =
\begin{cases}
0 \mbox{ ~~if a packet is delivered for client } n \mbox{ in slot } t,\\
X_n(t) + 1    \mbox{\quad \quad otherwise }.
\end{cases}
\end{align*}


As a consequence, the system forms a controlled Markov Chain, with transition probabilities, 
\begin{align*}
\nonumber
&\mathrm{P}\left[X(t+1)=\mathbf{y}\big|X(t)=\mathbf{x},U(t)=u\right] \\
=&
\begin{cases}
p_u & \mbox{if } \mathbf{y}=(x_1\!+\!1,\!\cdots\!,x_{u\!-\!1}\!+\!1,0,x_{u\!+\!1}\!+\!1,\!\cdots\!,x_{\!N}\!+\!1), \\
1-p_u & \mbox{if } \mathbf{y}=\mathbf{x}+\mathbf{1}, \\
0 & \mbox{otherwise}, 
\end{cases}
\end{align*}
where  $\mathbf{1}:=(1,\cdots,1)$. 
The $T$-horizon optimal cost-to-go from initial state $\mathbf{x}$ is given by,
\begin{align}\label{eq:cost define MDP1 v_T}
\nonumber
V_T (\mathbf{x})&:=\min _{\pi} \mathrm{E}_\pi \bigg[\exp\bigg(\theta \sum_{t=0}^{T-1} \sum_{n=1}^{N}  \left(X_n(t) + 1-\tau_n \right)^{+} \\
&\cdot
\mathds{1}\{X_n(t+1) = 0\}\bigg)\Big|X(0)=\mathbf{x}\bigg],
\end{align}
where $\mathds{1}\{\cdot\}$ is the indicator function, and assuming  $X(T):=\mathbf{0}$ so as to recover the last term in the cost \eqref{eq:cost define origninal}. 
Here, the minimization is taken over all history-dependent scheduling policies $\pi$. 
Our goal is to design an optimal history-dependent policy that achieves the optimal cost-to-go $V_T(\mathbf{x})$ for any initial state $\mathbf{x}$. 

Later in Section \ref{se:risk-sensitive MDP}, equation \eqref{eq:long term cost define}, we will consider the infinite horizon cost $J(\pi,\mathbf{x})$ when $T\rightarrow\infty$.  
\section{Reduction to Finite State Problem}
We denote the problem in Section \ref{se:problem formulation} as \textit{MDP-1}. 
We now show that it is equivalent to another finite-state problem.  

It directly follows from \eqref{eq:cost define MDP1 v_T} that the DP recursive relationship of the optimal cost-to-go functions in MDP-1 is: 
\begin{align}\label{eq:Recursive Relationship v1}
\nonumber
V_T(\mathbf{x})&=\min_{n} \big\{
p_n \exp\left({\theta \left(x_n+1-\tau_n\right)^+}\right) V_{T-1}\big(\mathcal{S}_n(\mathbf{x})\big) \\
& + (1\!-\!p_n) V_{T-1}\left(\mathbf{x}+\mathbf{1}\right)
\big\}, 
\end{align}
where
\begin{align}\label{eq:Snx}
	\mathcal{S}_n(\mathbf{x}):=(x_1\!+\!1,\!\cdots\!,x_{n\!-\!1}\!+\!1,0,x_{n\!+\!1}\!+\!1,\!\cdots\!,x_{\!N}\!+\!1),
\end{align}
i.e., the state that succeeds the state $\mathbf{x}$ in the event of a successful transmission for client $n$. 

\begin{lemma}\label{th:MDP-1 Lemma}
For MDP-1, the following results hold:  
\begin{enumerate}[1)]
\item For all $n\in\{1,\cdots,N\}$, and  $\forall x_1,\cdots,x_N\geq 0$, 
\begin{align}\label{eq:fundamental claim}
\nonumber
& V_T\big(x_1,\cdots,x_n+\tau_n,\cdots,x_N\big) \\ 
&= \exp\left({\theta x_n}\right) \cdot V_T\big(x_1,\cdots,\tau_n,\cdots,x_N\big).
\end{align}
Further, the optimal controls in the two states, \\ $\left(x_1,\cdots,x_n\!+\!\tau_n,\cdots,x_N\right)$ and  $\left(x_1,\cdots,\tau_n,\cdots,x_N\right)$, are the same.
\item The optimal cost function starting with any system state $\mathbf{x}$ such that $x_n\leq \tau_n, \forall n$ satisfies:
\begin{align}\label{eq:equivalent recursive relationship}
\nonumber
V_T(\mathbf{x}\!)&=\!\exp\!\Big(\!\theta \! \sum_{n=1}^{N}\! \mathds{1}\!\{x_n=\tau_n\} \!\!\Big)
\min_{u} \!\Big\{
p_u  V_{T\!-\!1}\left(\mathcal{S}_u(\mathbf{x})\!\wedge \!\boldsymbol{\tau}\right) \\
&+ (1-p_u) V_{T-1}\left(\left(\mathbf{x}+\mathbf{1}\right)\wedge \boldsymbol{\tau}\right)
\Big\}, 
\end{align}
where $\mathcal{S}_u(\mathbf{x})$ is as in \eqref{eq:Snx};
\item  
$Y(t):= X(t) \wedge \boldsymbol{\tau}$ is a Markov Decision Process, i.e., 
\begin{align*}
&\mathrm{P}\big[Y(t+1)\big|Y(t),\cdots,Y(0), U(t),\cdots,U(0) \big]\\ =&\mathrm{P}\big[Y(t+1)\big|Y(t), U(t)\big].
\end{align*}
\end{enumerate}	
\end{lemma}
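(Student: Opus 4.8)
The plan is to establish the three items in order, with the workhorse being the dynamic-programming recursion \eqref{eq:Recursive Relationship v1}: item~2 will drop out of item~1, and item~3 out of the clipping structure made visible in item~2. For item~1 I would induct on the horizon $T$, with $n$ held fixed. The base case is the explicit terminal cost $V_0(\mathbf{x})=\exp\!\big(\theta\sum_m(x_m-\tau_m)^+\big)$, for which \eqref{eq:fundamental claim} is immediate since $(x_n+\tau_n-\tau_n)^+=x_n$ (using $x_n\ge0$) while $(\tau_n-\tau_n)^+=0$, the other coordinates being untouched. For the inductive step, write $\mathbf{x}^{\uparrow}:=(x_1,\ldots,x_n+\tau_n,\ldots,x_N)$ and $\mathbf{x}^{0}:=(x_1,\ldots,\tau_n,\ldots,x_N)$; the goal is $V_T(\mathbf{x}^{\uparrow})=\exp(\theta x_n)V_T(\mathbf{x}^{0})$, and I would get it by comparing the two minimizations in \eqref{eq:Recursive Relationship v1} term by term in the control $u$.

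If $u=n$, the operator $\mathcal{S}_n$ of \eqref{eq:Snx} resets coordinate $n$ to $0$, so the success-successor is common to both states; the immediate factor is $\exp\!\big(\theta(x_n+\tau_n+1-\tau_n)^+\big)=\exp(\theta(x_n+1))$ for $\mathbf{x}^{\uparrow}$ versus $\exp(\theta)$ for $\mathbf{x}^{0}$; and the failure-successor $\mathbf{x}^{\uparrow}+\mathbf{1}$ has $n$-th coordinate $(x_n+1)+\tau_n$, on which the inductive hypothesis produces a factor $\exp(\theta(x_n+1))$ relative to $\mathbf{x}^{0}+\mathbf{1}$, whose $n$-th coordinate $\tau_n+1=1+\tau_n$ contributes only $\exp(\theta)$. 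Collecting these, the $u=n$ term at $\mathbf{x}^{\uparrow}$ equals $\exp(\theta x_n)$ times the $u=n$ term at $\mathbf{x}^{0}$. If $u=m\neq n$, coordinate $n$ is incremented rather than reset in both the success- and failure-successors, so each successor of $\mathbf{x}^{\uparrow}$ has $n$-th coordinate exceeding the corresponding successor of $\mathbf{x}^{0}$ by exactly $x_n$; applying the inductive hypothesis to both (the immediate factor, which involves only coordinate $m$, being unchanged) again yields the common factor $\exp(\theta x_n)$. Since every term of the $\min$ is multiplied by the same positive constant $\exp(\theta x_n)$, \eqref{eq:fundamental claim} follows and the argmin sets coincide. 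I expect this bookkeeping --- keeping straight how the offset $\tau_n$ is annihilated by $\mathcal{S}_n$ on one branch but merely shifted on the others, and how the immediate exponential cost combines with the induction factors so that all terms scale by the same amount --- to be the main obstacle; everything else is routine.

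For item~2, restrict \eqref{eq:Recursive Relationship v1} to states $\mathbf{x}$ with $x_n\le\tau_n$ for all $n$. Then $(x_u+1-\tau_u)^+=\mathds{1}\{x_u=\tau_u\}$, so the immediate factor for control $u$ is $\exp\!\big(\theta\,\mathds{1}\{x_u=\tau_u\}\big)$. The successor $\mathcal{S}_u(\mathbf{x})$ has $u$-th coordinate $0\le\tau_u$ and $j$-th coordinate $x_j+1$ for $j\ne u$, which exceeds $\tau_j$ exactly when $x_j=\tau_j$, and then by exactly $1$; similarly $\mathbf{x}+\mathbf{1}$ exceeds $\boldsymbol{\tau}$ precisely in the coordinates $j$ with $x_j=\tau_j$, each by $1$. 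Peeling these off one coordinate at a time via \eqref{eq:fundamental claim} --- each peel costing a factor $\exp(\theta)$ --- gives $V_{T-1}(\mathcal{S}_u(\mathbf{x}))=\exp(\theta K_u)\,V_{T-1}(\mathcal{S}_u(\mathbf{x})\wedge\boldsymbol{\tau})$ with $K_u=|\{j\ne u:x_j=\tau_j\}|$, and $V_{T-1}(\mathbf{x}+\mathbf{1})=\exp(\theta K)\,V_{T-1}((\mathbf{x}+\mathbf{1})\wedge\boldsymbol{\tau})$ with $K=|\{j:x_j=\tau_j\}|$. Since $K_u+\mathds{1}\{x_u=\tau_u\}=K=\sum_n\mathds{1}\{x_n=\tau_n\}$ for every $u$, the factor $\exp(\theta K)$ is common to all terms and pulls out of the $\min$, which is exactly \eqref{eq:equivalent recursive relationship}.

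For item~3, I would first verify the two coordinatewise identities $\mathcal{S}_u(\mathbf{x})\wedge\boldsymbol{\tau}=\mathcal{S}_u(\mathbf{x}\wedge\boldsymbol{\tau})\wedge\boldsymbol{\tau}$ and $(\mathbf{x}+\mathbf{1})\wedge\boldsymbol{\tau}=((\mathbf{x}\wedge\boldsymbol{\tau})+\mathbf{1})\wedge\boldsymbol{\tau}$, both of which reduce to the elementary facts $\min(x_j+1,\tau_j)=\min(\min(x_j,\tau_j)+1,\tau_j)$ and $\min(0,\tau_u)=0$. Consequently, given $U(t)=u$, we have $Y(t+1)=\mathcal{S}_u(Y(t))\wedge\boldsymbol{\tau}$ on the channel-success event (probability $p_u$) and $Y(t+1)=(Y(t)+\mathbf{1})\wedge\boldsymbol{\tau}$ on the failure event (probability $1-p_u$); since the success/failure outcome is independent of $(X(0),\ldots,X(t),U(0),\ldots,U(t))$, the conditional law of $Y(t+1)$ given the full state-and-control history depends only on $(Y(t),U(t))$, and conditioning this down onto the coarser $\sigma$-field generated by $(Y(0),\ldots,Y(t),U(0),\ldots,U(t))$ yields the asserted Markov property. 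Items~2 and~3 are thus essentially mechanical once item~1 is in hand.
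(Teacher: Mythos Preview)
Your proposal is correct. The paper omits its own proof of this lemma due to space constraints, so there is no paper argument to compare against; that said, induction on the horizon $T$ using the recursion~\eqref{eq:Recursive Relationship v1} is the natural route, and your bookkeeping in the inductive step (reducing both the $u=n$ and $u\neq n$ branches to a common $\exp(\theta x_n)$ factor, then pulling it through the $\min$) is exactly what is needed. One small remark: your terminal value $V_0(\mathbf{x})=\exp\!\big(\theta\sum_m(x_m-\tau_m)^+\big)$ is indeed the right choice to make the recursion~\eqref{eq:Recursive Relationship v1} consistent with the convention $X(T)=\mathbf 0$ in~\eqref{eq:cost define MDP1 v_T}, but since the paper does not spell this out, it would be worth a one-line justification in a full write-up.
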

\begin{proof}
The proof is omitted due to space constraints.
\end{proof}

Now, we construct a new \textit{MDP-2}, and show in Theorem \ref{th:equivalent theorem} that it is equivalent to MDP-$1$ in an appropriate sense. By slightly abusing notation, we still use the symbols $Y(t)$ and $U(t)$ for state and client-to-transmit. 

Let us associate a state variable $Y_n(t)$ with each client $n$, with $Y_n(0)\in \{0,1,\cdots, \tau_n \}$, which evolves as, 
\begin{align*}
Y_n(t+1) =
\begin{cases}
0 \mbox{~~if a packet delivered for client } n \mbox{ in slot } t,\\
\left(Y_n(t) + 1 \right) \wedge \tau_n \mbox{ \quad \quad  otherwise }.
\end{cases}
\end{align*}
{Then the system \textit{state space} is $\mathbb{Y}:=\prod_{n=1}^{N}\left\{0,1,\cdots,\tau_n \right\}$}, which is finite, even for the infinite time horizon problem. 
The ~transition ~probabilities ~of ~the ~process \\ $Y(t): =\left(Y_1(t),\cdots,Y_N(t)\right) $ depend on the control $U(t)$ as, 
\begin{align} \label{eq:system evolve}
&\mathrm{P}\big[Y(t+1)=\mathbf{y}\big|Y(t)=\mathbf{x},U(t)=u\big] \\ \nonumber 
= & \left\{
\begin{array}{ll}
p_u & \mbox{if } \mathbf{y}=\mathcal{S}_u(\mathbf{x})\wedge\boldsymbol{\tau}, \\
1-p_u & \mbox{if } \mathbf{y}=\left(\mathbf{x}+\mathbf{1}\right)\wedge\boldsymbol{\tau} , \\
0 & \mbox{otherwise}, 
\end{array}
\right.
\end{align}
where $\mathcal{S}_u(\mathbf{x})$ is as in \eqref{eq:Snx}.

We associate the following cost to the system with starting state $\mathbf{x}\in \mathbb{Y}$, when policy $\pi$ is applied: 
\begin{align} \label{eq:cost define MDP2 V_T^Pi}
V_T^{\pi}(\mathbf{x})=
\mathrm{E}_{\pi} \!\Big[\exp \Big(\!\theta \! \sum_{t=0}^{T-1}\!\sum_{n=1}^{N} \mathds{1}\!\!\left\{Y_n(t)=\tau_n\right\}\!\!
\Big)\!\Big|Y\!(0)=\mathbf{x} \Big] .
\end{align}
The optimal cost-to-go function is, 
\begin{align}
\tilde{V}_T (\mathbf{x}):= \min_{\pi} V_T^\pi(\mathbf{x}), \forall \mathbf{x}\in \mathbb{Y}. 
\end{align}
Here, the superscript tilde is to distinguish it from the optimal cost for the MDP-$1$ in \eqref{eq:cost define MDP1 v_T}.

\begin{theorem}\label{th:equivalent theorem}
The MDP-2 is equivalent to MDP-1 in the following senses:
\begin{enumerate}[1)]
\item The optimal cost-to-go functions of the two MDPs are equal in each time slot $t$ for any starting state  $\mathbf{x}$ such that $x_n\leq \tau_n,\forall n$, i.e., 
\begin{align*}
V_T(\mathbf{x})=\tilde{V}_T(\mathbf{x}), \forall \mathbf{x} \in \mathbb{Y};
\end{align*}
\item Any optimal control for MDP-$1$ in state  $\mathbf{x}$ is also optimal for MDP-$2$ in state  $\mathbf{x}\wedge\boldsymbol{\tau}$, and conversely.
\end{enumerate}

\end{theorem}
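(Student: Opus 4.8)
The plan is to prove both claims by a single induction on the horizon $T$, with Lemma~\ref{th:MDP-1 Lemma} doing all the real work. The crucial point is that the recursion \eqref{eq:equivalent recursive relationship} furnished by part~2 of Lemma~\ref{th:MDP-1 Lemma}, valid for every state $\mathbf{x}$ with $x_n\le\tau_n$ for all $n$, only ever evaluates the value function at the two successor states $\mathcal{S}_u(\mathbf{x})\wedge\boldsymbol{\tau}$ and $(\mathbf{x}+\mathbf{1})\wedge\boldsymbol{\tau}$, both of which again lie in $\mathbb{Y}$; so $V_T$ restricted to $\mathbb{Y}$ satisfies a self-contained recursion on $\mathbb{Y}$, and it suffices to check that this is the very recursion obeyed by $\tilde V_T$.

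First I would write out the dynamic programming recursion for MDP-2. From \eqref{eq:cost define MDP2 V_T^Pi}, the $t=0$ summand contributes the deterministic multiplicative factor $\exp\!\big(\theta\sum_n\mathds{1}\{x_n=\tau_n\}\big)$; conditioning on $Y(1)$, using the transition law \eqref{eq:system evolve}, and shifting time gives
\begin{align*}
\tilde V_T(\mathbf{x}) = \exp\!\Big(\theta\sum_{n=1}^N\mathds{1}\{x_n=\tau_n\}\Big)\min_u\Big\{p_u\,\tilde V_{T-1}\big(\mathcal{S}_u(\mathbf{x})\wedge\boldsymbol{\tau}\big)+(1-p_u)\,\tilde V_{T-1}\big((\mathbf{x}+\mathbf{1})\wedge\boldsymbol{\tau}\big)\Big\},
\end{align*}
with $\tilde V_0\equiv 1$. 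On the MDP-1 side, $V_0\equiv 1$ as well (the sum in \eqref{eq:cost define MDP1 v_T} is empty when $T=0$), and \eqref{eq:equivalent recursive relationship} is literally this same recursion for $\mathbf{x}\in\mathbb{Y}$. Hence an induction on $T$ yields $V_T(\mathbf{x})=\tilde V_T(\mathbf{x})$ for all $\mathbf{x}\in\mathbb{Y}$, which is claim~1.

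For claim~2, once $V_{T-1}=\tilde V_{T-1}$ on $\mathbb{Y}$ is in hand, the minimization over $u$ defining the optimal control at a state $\mathbf{x}\in\mathbb{Y}$ is, by the two displayed recursions, identical for the two MDPs, so the sets of minimizers coincide; this settles the case $\mathbf{x}\in\mathbb{Y}$. For a general MDP-1 state $\mathbf{x}$ (with possibly $x_n>\tau_n$), I would iterate part~1 of Lemma~\ref{th:MDP-1 Lemma} coordinate by coordinate: writing, for each $n$ in turn, $x_n=(x_n-\tau_n)^+ + (x_n\wedge\tau_n)$ and collapsing the excess $(x_n-\tau_n)^+$ via the ``equal optimal controls'' assertion of that part, one obtains that the optimal control at $\mathbf{x}$ equals the optimal control at $\mathbf{x}\wedge\boldsymbol{\tau}\in\mathbb{Y}$, which by the previous sentence is exactly the optimal control for MDP-2 at $\mathbf{x}\wedge\boldsymbol{\tau}$.

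I expect no conceptual obstacle here: the substantive content of the equivalence is already packaged in Lemma~\ref{th:MDP-1 Lemma} --- especially the clipping identity \eqref{eq:fundamental claim}, which is what converts the ``excess delay charged at the instant of delivery'' cost of MDP-1 into the ``one unit per slot spent at the threshold'' cost of MDP-2 --- and given that lemma, Theorem~\ref{th:equivalent theorem} is the routine induction above. The only points needing a little care are bookkeeping: verifying that the successor states never leave $\mathbb{Y}$ (so the induction is well-posed), and matching the $T=0$ boundary values $V_0\equiv\tilde V_0\equiv 1$.
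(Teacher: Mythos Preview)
Your proposal is correct and follows essentially the same approach as the paper: write the DP recursion for MDP-2, observe that it coincides with \eqref{eq:equivalent recursive relationship} on $\mathbb{Y}$, conclude by induction that $V_T=\tilde V_T$ on $\mathbb{Y}$, and then invoke part~1 of Lemma~\ref{th:MDP-1 Lemma} to transport optimal controls from a general $\mathbf{x}$ to $\mathbf{x}\wedge\boldsymbol{\tau}$. The only difference is that you are more explicit about the induction base case and the coordinate-wise application of Lemma~\ref{th:MDP-1 Lemma}(1), which the paper leaves implicit.
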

\begin{proof}
The DP recursion for the optimal cost in MDP-$2$ is
\begin{align}\label{eq:recursive MDP2}
\tilde{V}_{\!T}\!(\mathbf{x}\!)\!=\! \exp\!\Big(\!\theta \! \sum_{n=1}^{N}\!\mathds{1}\!\{x_n\!=\!\tau_n\!\} \! \!\Big) \! \min_{u}\!\Big\{\! \! \sum_{\mathbf{y}}\!P_{\!u}(\!\mathbf{x},\mathbf{y}\!) \tilde{V}_{T\!-\!1}(\mathbf{y}\!)\!
\Big\}\!, 
\end{align}
where $P_u(\mathbf{x},\mathbf{y}) := \mathrm{P}\big[Y(t+1)=\mathbf{y}\big|Y(t)=\mathbf{x},U(t)=u\big] $. 
Recalling \eqref{eq:system evolve}, we note that 
the r.h.s. of \eqref{eq:equivalent recursive relationship} and the r.h.s. of \eqref{eq:recursive MDP2} evolve in exactly the same way. 
Thus, the optimal cost for MDP-$1$, i.e., $V_T(\mathbf{x})$, and the optimal cost for MDP-$2$, i.e., $\tilde{V}_T(\mathbf{x})$, have identical recursive relationships for $\mathbf{x}\in \mathbb{Y}$. Consequently, statement 1) follows. 

In addition, due to the identical recursive relationships, the optimal controls at any state $\mathbf{x}\in \mathbb{Y}$ for the two systems are also identical. Combining this with the first statement of Lemma \ref{th:MDP-1 Lemma}, we obtain statement 2). 
\end{proof}

\section{The Risk-sensitive Approach} \label{se:risk-sensitive MDP}
The great advantage of the equivalent MDP-$2$ is that its state space is finite. 
Following Theorem \ref{th:equivalent theorem}, we focus exclusively on MDP-$2$ in the following. 

To consider long-term operation, we define
the\emph{ (risk-sensitive infinite horizon) average cost} under policy $\pi$ starting at state $\mathbf{x}$,
\begin{align}\label{eq:long term cost define}
J(\pi,\mathbf{x}):=\limsup_{T\rightarrow\infty} \frac{1}{\theta}\cdot \frac{1}{T}
\ln V_T^{\pi}(\mathbf{x}), \forall\mathbf{x}\in\mathbb{Y},
\end{align}
where $V_T^\pi(\mathbf{x})$ is as in \eqref{eq:cost define MDP2 V_T^Pi}. 

A \textit{stationary policy} is one that decides the current control (which client to transmit) by the current system state. Thus, it can be described 
by a map $f$ from state space $\mathbb{Y}$ to control set $\{1,\cdots,N\}$, i.e., the control $U(t)=f\left(Y(t)\right)$. 

We now define the class of \textit{Non-Exclusionary (NE)} policies as those stationary policies which do not serve a client $n$ when the system state $\mathbf{x}$ is $(\tau_1,\cdots,\tau_{n-1},0,\tau_{n+1},\cdots,\tau_N)$.
It is shown in Appendix \ref{se:policies of interest} that for any non-NE policy, either there is an NE policy out-performing it, or it is trivial to obtain the cost associated with it. 
Thus, we focus on NE policies.

In the following, we use the standard notations of transient/ non-transient states and communicating classes \cite{Puterman1994}. 
For a stationary policy $f$, let $\boldsymbol{\mathrm{P}}^f$ denote its \textit{transition probability matrix}, 
\begin{align*}
(\boldsymbol{\mathrm{P}}^f)_{\mathbf{x},\mathbf{y}}:=\mathrm{P}\left[Y(t+1)=\mathbf{y}\big|Y(t)=\mathbf{x},U(t)=f(\mathbf{x})\right].
\end{align*}
Further, let $\boldsymbol{\mathrm{L}}^f$ denote the \textit{dis-utility matrix} of $f$, such that, 
\begin{align*}
(\boldsymbol{\mathrm{L}}^f)_{\mathbf{x},\mathbf{y}}:=\exp\! \Big( \theta   \sum_{n=1}^{N}\!\mathds{1}\!\{x_n = \tau_n \} \! \Big) \cdot  (\boldsymbol{\mathrm{P}}^f)_{\mathbf{x},\mathbf{y}},\forall \mathbf{x}, \mathbf{y}\in \mathbb{Y}.
\end{align*}
Let $\rho(\boldsymbol{\mathrm{L}}^f)$ be its spectral radius. 

\begin{lemma}\label{th:doeblin condition lemma}
Consider any NE policy $f$,
\begin{enumerate}[1)]
\item There is exactly one non-transient communicating class, which includes the state  $\left(\tau_1,\ldots,\tau_N\right) $.
\item For any transient state $\mathbf{y}$, we have, $(\boldsymbol{\mathrm{P}}^f)_{\mathbf{y},\mathbf{y}}=0$.
\item Assuming that there is only one communicating class (and thus non-transient) when $f$ is applied, we have,   
\begin{align}\label{eq:value for stationary policy}
J(f,\mathbf{y})=\rho(\boldsymbol{\mathrm{L}}^f), ~\forall \mathbf{y}.
\end{align}
\end{enumerate}
\end{lemma}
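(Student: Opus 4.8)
The plan is to settle parts 1) and 2) as short structural facts about the transition graph of $\boldsymbol{\mathrm{P}}^f$, and then to prove part 3) by identifying $V_T^f$ with a matrix power of the dis-utility matrix $\boldsymbol{\mathrm{L}}^f$ and invoking Perron--Frobenius. For part 1), the starting point is that at any state $\mathbf{x}$, under any control $u$, a transmission failure (probability $1-p_u>0$) sends the state to $(\mathbf{x}+\mathbf{1})\wedge\boldsymbol{\tau}$, so iterating at most $\max_n\tau_n$ failures drives the state deterministically to $\boldsymbol{\tau}=(\tau_1,\ldots,\tau_N)$. Hence under any NE policy $f$, the state $\boldsymbol{\tau}$ is reachable from every state with positive probability. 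In a finite chain this forces $\boldsymbol{\tau}$ to be recurrent --- started from $\boldsymbol{\tau}$ the chain almost surely enters some closed communicating class $C$, and since $\boldsymbol{\tau}$ is reachable from the states of the closed set $C$ we get $\boldsymbol{\tau}\in C$ --- and it forces any recurrent state to lie in a closed class from which $\boldsymbol{\tau}$ is reachable, hence in $C$; so $C$ is the unique non-transient communicating class and it contains $\boldsymbol{\tau}$.

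For part 2), I would enumerate the ways a self-loop $(\boldsymbol{\mathrm{P}}^f)_{\mathbf{y},\mathbf{y}}>0$ can occur. Writing $u=f(\mathbf{y})$: the failure successor $(\mathbf{y}+\mathbf{1})\wedge\boldsymbol{\tau}$ equals $\mathbf{y}$ only if $\mathbf{y}=\boldsymbol{\tau}$, which is non-transient by part 1); and the success successor $\mathcal{S}_u(\mathbf{y})\wedge\boldsymbol{\tau}$ equals $\mathbf{y}$ only if $y_u=0$ and $y_m=\tau_m$ for every $m\neq u$, i.e.\ $\mathbf{y}=(\tau_1,\ldots,\tau_{u-1},0,\tau_{u+1},\ldots,\tau_N)$ --- but in that state an NE policy by definition does not serve client $u$, contradicting $u=f(\mathbf{y})$. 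Therefore every transient state $\mathbf{y}$ has $(\boldsymbol{\mathrm{P}}^f)_{\mathbf{y},\mathbf{y}}=0$.

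For part 3), assume $\boldsymbol{\mathrm{P}}^f$ is irreducible on $\mathbb{Y}$. First, unrolling \eqref{eq:cost define MDP2 V_T^Pi} with the Markov property and the tower rule yields the linear recursion $V_T^f=\boldsymbol{\mathrm{L}}^f V_{T-1}^f$ with $V_0^f=\mathbf{1}$, hence $V_T^f=(\boldsymbol{\mathrm{L}}^f)^T\mathbf{1}$ with $\mathbf{1}=(1,\ldots,1)$; this is precisely why $\boldsymbol{\mathrm{L}}^f$ was built to absorb the per-step cost factor $\exp(\theta\sum_n\mathds{1}\{x_n=\tau_n\})$ into the transition probability. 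Since that factor is strictly positive, $\boldsymbol{\mathrm{L}}^f$ has the same sign pattern as $\boldsymbol{\mathrm{P}}^f$ and is an irreducible nonnegative matrix, so the Perron--Frobenius theorem gives $\rho(\boldsymbol{\mathrm{L}}^f)>0$ together with a strictly positive right eigenvector $\mathbf{v}$ satisfying $\boldsymbol{\mathrm{L}}^f\mathbf{v}=\rho(\boldsymbol{\mathrm{L}}^f)\mathbf{v}$. As $\mathbb{Y}$ is finite and $\mathbf{v}>0$, there are constants $0<c_1\le c_2$ with $c_1\mathbf{v}\le\mathbf{1}\le c_2\mathbf{v}$ componentwise; applying the nonnegative matrix $(\boldsymbol{\mathrm{L}}^f)^T$ gives $c_1\,\rho(\boldsymbol{\mathrm{L}}^f)^T v_{\mathbf{y}}\le V_T^f(\mathbf{y})\le c_2\,\rho(\boldsymbol{\mathrm{L}}^f)^T v_{\mathbf{y}}$, so $\tfrac1T\ln V_T^f(\mathbf{y})\to\ln\rho(\boldsymbol{\mathrm{L}}^f)$ for every $\mathbf{y}$; in particular the $\limsup$ in \eqref{eq:long term cost define} is a genuine limit, independent of the starting state, which is \eqref{eq:value for stationary policy}.

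The hard part is part 3): one has to recognize the risk-sensitive cost as the row sum of $(\boldsymbol{\mathrm{L}}^f)^T$ and then pin down its exponential growth rate uniformly in the initial state, which requires Perron--Frobenius for irreducible (and possibly periodic) nonnegative matrices rather than just an ergodic theorem for $\boldsymbol{\mathrm{P}}^f$; the eigenvector sandwich is what makes the argument insensitive to periodicity, and one should also note (here immediate) that irreducibility of $\boldsymbol{\mathrm{P}}^f$ passes to $\boldsymbol{\mathrm{L}}^f$. Parts 1) and 2), by contrast, are routine once one spots the deterministic drift to $\boldsymbol{\tau}$ produced by consecutive channel failures.
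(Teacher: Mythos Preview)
Your argument is correct and matches the paper's approach essentially verbatim for parts 1) and 2): reachability of $\boldsymbol{\tau}$ via $\max_n\tau_n$ consecutive failures, and the two-case self-loop analysis with the NE property excluding the $(\tau_1,\ldots,0,\ldots,\tau_N)$ case. For part 3) the paper simply says ``proved by matrix analysis techniques'' and omits the details; your Perron--Frobenius eigenvector sandwich is exactly the natural such technique, so you have filled in what the paper leaves out rather than taken a different route. One small remark: your derivation actually yields $J(f,\mathbf{y})=\tfrac{1}{\theta}\ln\rho(\boldsymbol{\mathrm{L}}^f)$, not $\rho(\boldsymbol{\mathrm{L}}^f)$ as written in \eqref{eq:value for stationary policy}, so the lemma as stated appears to carry a typo that you should flag rather than silently absorb.
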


\begin{proof}
Since $p_n<1$ for each client $n$, there is a positive probability that there will be no packet deliveries for $\tau_{\max}:=\max_{n=1}^N \tau_n$  time slots, and so system state $(\tau_1,\cdots,\tau_N)$ is reachable from any state.

We now prove 2). From \eqref{eq:system evolve}, we have,
$(\boldsymbol{\mathrm{P}}^f)_{\mathbf{y},\mathbf{y}}\neq 0$  implies either $\mathbf{y}=\left(\tau_1,\cdots,\tau_N\right)$, or $\exists n$ and $\mathbf{y}$ such that $f(\mathbf{y})=n$ and $y_{n}=0, y_{l}=\tau_l, \forall l\neq n.$
However, the state $\left(\tau_1,\ldots,\tau_N\right) $ in the former case is non-transient, while the latter condition is ruled out because of the property of the NE policy.

Statement 3) is proved by matrix analysis techniques, and the proof is omitted here due to space constraints. 
\end{proof}

In the following, we assume that for any NE policy, there is only one self-communicating class. This assumption is not restrictive considering the first statement in Lemma \ref{th:doeblin condition lemma}, since we at least can simply restrict the state space to the one non-transient communicating class. (In addition, by the second statement of Lemma \ref{th:doeblin condition lemma}, it follows that there is no one-state transient communicating class.) 
As a result, the average cost of any NE policy can be obtained by \eqref{eq:value for stationary policy}. 

Next, let $p_{\max}:=\max_{n=1}^N p_n$ and $\tau_{\max}:=\max_{n=1}^N \tau_n$. We further denote  $K:=\big\lceil \tau_{\max}(1-p_{\max})^{-\tau_{\max}}\big\rceil.$
\begin{theorem}\label{th:optimality equation} 
Let 
$$\theta_\text{th}:=\frac{\ln(K+1)-\ln(K)}{2N\left(K+1\right)}.$$ 
For the infinite-horizon MDP-$2$, 
\begin{enumerate}[1)]
\item There exists a stationary optimal policy when $\theta<\theta_\text{th}$; 
\item Further, this stationary optimal policy can be computed in a finite number of steps. 
\end{enumerate}
\end{theorem}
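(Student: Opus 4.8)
The plan is to establish the \emph{risk-sensitive average-cost optimality equation} for MDP-2 and then read the stationary optimal policy off it. Write $g(\mathbf{x}):=\exp(\theta\sum_{n=1}^{N}\mathds{1}\{x_n=\tau_n\})\in[1,e^{\theta N}]$ and let $\mathcal{T}$ act on positive functions $h:\mathbb{Y}\to(0,\infty)$ by $(\mathcal{T}h)(\mathbf{x}):=g(\mathbf{x})\min_{u}\sum_{\mathbf{y}}P_u(\mathbf{x},\mathbf{y})h(\mathbf{y})$, so that by \eqref{eq:recursive MDP2} with $\tilde{V}_0\equiv 1$ the finite-horizon optimal costs satisfy $\tilde{V}_T=\mathcal{T}^{T}\mathbf{1}$. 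The operator $\mathcal{T}$ is order preserving and positively homogeneous of degree one; equivalently, in the variables $w=\ln h$ the map $\mathcal{G}w:=\ln(\mathcal{T}e^{w})$ is monotone and commutes with adding a constant, hence is non-expansive for the span seminorm $\|w\|_{\mathrm{sp}}:=\max_{\mathbf{x}}w(\mathbf{x})-\min_{\mathbf{x}}w(\mathbf{x})$, i.e. $\mathcal{T}$ is non-expansive for Hilbert's projective metric $d$ on rays of the positive cone.

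Next I would feed in the mixing already isolated in Lemma~\ref{th:doeblin condition lemma}: since under any control each channel stays idle with probability at least $1-p_{\max}$, for \emph{every} control sequence the reference state $\boldsymbol{\tau}=(\tau_1,\dots,\tau_N)$ is reached within $\tau_{\max}$ steps with probability at least $(1-p_{\max})^{\tau_{\max}}$, so iterating, the hitting time of $\boldsymbol{\tau}$ has a geometric-type tail with mean of order $\tau_{\max}(1-p_{\max})^{-\tau_{\max}}$ --- the quantity recorded by $K$. The main obstacle is to turn this ``simultaneous Doeblin'' minorization into a genuine contraction: one must show that some power $\mathcal{T}^{m}$, with $m$ of order $K$, satisfies $d(\mathcal{T}^{m}h,\mathcal{T}^{m}h')\le\alpha\,d(h,h')$ with $\alpha<1$, and --- crucially --- that the exponential cost, which can amplify the discrepancy between two value functions by at most a factor $e^{\theta N}$ per step and hence by roughly $e^{\theta Nm}$ over such a block, does not overwhelm the mixing. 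Requiring that it does not is exactly the condition $\theta<\theta_\text{th}$: the numerator $\ln(K+1)-\ln K$ measures the residual contraction available after a block of $\sim K$ steps, and the denominator $2N(K+1)$ absorbs the per-step cost exponent $\theta N$ accumulated over such a block. (Alternatively one may invoke the existence results of \cite{Cadena1999,Marcus1997,Fleming1997} and verify their hypotheses, which reduces to the same estimate.)

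Granting that some $\mathcal{T}^{m}$ is a $d$-contraction, Banach's fixed-point theorem --- applied, after normalizing $h(\boldsymbol{\tau})=1$, on the resulting complete metric space --- gives a unique fixed ray $[h^{*}]$; since $\mathcal{T}$ commutes with $\mathcal{T}^{m}$ and is homogeneous of degree one, $[\mathcal{T}h^{*}]$ is also a fixed ray of $\mathcal{T}^{m}$, so by uniqueness $\mathcal{T}h^{*}=\lambda^{*}h^{*}$ for some $\lambda^{*}>0$, $h^{*}>0$ --- the optimality equation. For the verification step let $f^{*}(\mathbf{x})$ attain the minimum in $(\mathcal{T}h^{*})(\mathbf{x})$; then $\boldsymbol{\mathrm{L}}^{f^{*}}h^{*}=\lambda^{*}h^{*}$ with $h^{*}>0$, so $\lambda^{*}=\rho(\boldsymbol{\mathrm{L}}^{f^{*}})$ (a nonnegative matrix with a strictly positive eigenvector has that eigenvalue as its spectral radius). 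Since $\mathbb{Y}$ is finite, $h^{*}$ is bounded above and below by positive constants, whence both $\tilde{V}_T=\mathcal{T}^{T}\mathbf{1}$ and $V_T^{f^{*}}=(\boldsymbol{\mathrm{L}}^{f^{*}})^{T}\mathbf{1}$ lie between fixed positive multiples of $\lambda^{*T}$; therefore $J(f^{*},\mathbf{x})=\frac{1}{\theta}\ln\lambda^{*}$ for every $\mathbf{x}$, while $V_T^{\pi}\ge\tilde{V}_T$ gives $J(\pi,\mathbf{x})\ge\frac{1}{\theta}\ln\lambda^{*}$ for every policy $\pi$; hence $f^{*}$ is optimal, proving statement~1). (If this $f^{*}$ is not NE, replace it using Appendix~\ref{se:policies of interest}, which shows a non-NE policy is either dominated by an NE policy or has an immediately computable cost, returning us to the setting of Lemma~\ref{th:doeblin condition lemma}.)

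Statement~2) then follows because the stationary policies are finite in number: having shown an optimal one is stationary, one may enumerate the stationary NE policies $f$ --- for each of which $J(f,\cdot)$ is a function of $\rho(\boldsymbol{\mathrm{L}}^{f})$ by Lemma~\ref{th:doeblin condition lemma}, a finite spectral-radius computation --- and output the minimizer. A faster alternative is risk-sensitive policy iteration: given an NE policy $f$ with Perron data $(\rho_f,h_f)$, $h_f>0$, put $f'(\mathbf{x})\in\arg\min_{u}g(\mathbf{x})(P_uh_f)(\mathbf{x})$; then $\boldsymbol{\mathrm{L}}^{f'}h_f\le\boldsymbol{\mathrm{L}}^{f}h_f=\rho_fh_f$ with $h_f>0$ forces $\rho(\boldsymbol{\mathrm{L}}^{f'})\le\rho_f$, with strict decrease unless $f$ already solves the optimality equation; since each step strictly improves a quantity ranging over a finite policy set, the procedure terminates at an optimal policy in finitely many steps.
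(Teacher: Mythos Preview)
Your argument is correct and tracks the paper's: both prove the simultaneous Doeblin bound $\mathrm{E}_f[T_\text{Db}(\mathbf{x})]\le K$ by the same geometric estimate (hit $\boldsymbol{\tau}$ after $\tau_{\max}$ consecutive failures, so the hitting time is stochastically dominated by $\tau_{\max}$ times a geometric variable), after which the paper simply cites Theorem~3.1 of \cite{Cadena1999} for statement~1) where you instead sketch the underlying span-seminorm/Hilbert-metric contraction before noting that same citation as an alternative; and both obtain statement~2) by enumerating the finitely many stationary policies and computing each $\rho(\boldsymbol{\mathrm{L}}^f)$ via Lemma~\ref{th:doeblin condition lemma} and Appendix~\ref{se:policies of interest}. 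One caveat on your ``faster alternative'': the paper explicitly warns (opening of Section~7) that classic policy iteration is problematic here because the chain is non-irreducible and the communicating class changes with the policy, so your strict-decrease/termination claim for risk-sensitive policy iteration would need more justification than you give --- but this side remark is not needed for the theorem.
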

\begin{proof}
Denote by $T_\text{Db}(\mathbf{x})$ the first passage time from state $\mathbf{x}$ to $\left(\tau_1,\cdots,\tau_N\right)$, i.e.,
\begin{align*}
T_\text{Db}(\mathbf{x}):=\min\!\big\{t>0\big|Y(t)=\left(\tau_1,\cdots,\tau_N\right),Y(0)=\mathbf{x}\big\}.
\end{align*}
We first prove the simultaneous Doeblin condition \cite{Cadena1999}, i.e., that for any stationary policy $f$, 
\begin{align}\label{eq:simutaneous doblin condition _E[T]}
\mathrm{E}_f[T_{\text{Db}}(\mathbf{x})
]\leq K,~\forall \mathbf{x}. 
\end{align}
Note that for any initial state $\mathbf{x}$, the state $\boldsymbol{\tau}$ will be hit if there are $\tau_{max}$ successive transmission failures. The probability of this event is $\geq (1-p_{\max})^{\tau_{\max}}$. Consider the probability that the state $\tau$ is hit within $j \tau_\text{max}$ time slots, then,
\begin{align*}
\mathrm{E}_f \left[T_\text{Db}(\mathbf{x})\right]&\leq
\sum_{j=1}^{+\infty}  j \tau_{\max}
\frac{(1\!-\!p_{\max})^{\tau_{\max}}}{\left[1\!-\!\left(1\!-\!p_{\max}\right)^{\tau_{\max}}\right]^{-(j-1)}}  \\
&=\frac{\tau_{\max}}{(1-p_{\max})^{\tau_{\max}}}.
\end{align*}
This proves \eqref{eq:simutaneous doblin condition _E[T]}. 
Consequently, statement 1) is proved by combining \eqref{eq:simutaneous doblin condition _E[T]} with the Theorem 3.1 in \cite{Cadena1999}. 

In addition, since we can obtain the average cost of any stationary policy (the cost of NE policy by Lemma \ref{th:doeblin condition lemma} and the cost of a non-NE policy by Appendix \ref{se:policies of interest}), and since there are a finite number of possible stationary policies (resulting from finite state space and control set), statement 2) follows. 
\end{proof}

\section{\!\!\!The High Reliability Asymptotic Approach}
 The risk-sensitive approach faces significant challenges on the issue of computational complexity: For MDP-2, denote the cardinality of the state space $\mathbb{Y}$ by $|\mathbb{Y}|$. Then, there are in total $|\mathbb{Y}|^N$ policies, each of which requires calculating the spectral radius of a $[0,1]^{|\mathbb{Y}|}\times[0,1]^{|\mathbb{Y}|}$ matrix to obtain the corresponding average cost (see Lemma \ref{th:doeblin condition lemma}). Further, the classic \textit{policy iteration} technique \cite{Puterman1994} does not apply to this risk-sensitive problem, which has a non-irreducible structure, and has a communicating class changing for different policies, and thus one needs to compare the cost over all the policies in order to find the optimal policy. In addition, since $|\mathbb{Y}|= \prod_{n=1}^N\left(\tau_n+1\right)$, the cardinality of the state-space is exponential in the number of clients, $N$.
 

However when the channel reliabilities are close to $1$, i.e., the system of interest is in a high-reliability asymptotic regime, and we are able to show that a simple ``modified-least-time-to-go" (MLG) policy, which is both structurally clean and easily implementable, is optimal. 
We note that the high-channel-reliability asymptotic is similar to the high SNR asymptotic in network information theory, see \cite{avestimehr2011wireless}. 

In this section, we derive some results useful for analyzing the MLG policy in the high-reliability regime. Later, in Section 8, we prove the optimality of the MLG policy in the case of two clients and propose a low-complexity policy for multi-clients which turns out to have good performance in the high-reliability regime.

For ease of exposition, we begin with the simple case of two clients sharing an AP. 
\subsection{Two-client Scenario and the MLG Policy}\label{se:MLG}
Consider two clients with channel reliabilities  $p_1=1-b_1\epsilon$, $p_2=1-b_2\epsilon$, where $\epsilon>0$ is a small quantity and $b_1,b_2>0$.
Suppose, without loss of generality, that $\tau_1=\tau$ and $\tau_2=\tau+\Delta$, where $\Delta \geq 0$. 

Define the \textit{modified-least-time-to-go (MLG)} policy by, 
\begin{align*}
f^\text{MLG}(\mathbf{x})=
\left\{
\begin{array}{ll}
2 \quad\quad  \mbox{if } \mathbf{x}=(0,\Delta-1),\\
\max_{n=1}^N \left\{ \arg \min_{n=1}^N \left(\tau_n-x_n \right) \right\} \quad \mbox{otherwise.}
\end{array}
\right.
\end{align*}
In words, the policy schedules the client with least $\tau_n-x_n$, i.e., ``least time to go", in most of the states, and breaks the ties by selecting the client with larger threshold. There is only one exception: When $\mathbf{x}=(0,\Delta-1)$, client $2$ is scheduled, although $\tau_2-x_2=\tau_1+1 > \tau_1-x_1=\tau_1$. 

We will show in Section \ref{se:Policy Design} that this MLG policy is indeed optimal when $\epsilon \rightarrow 0$.  We first explore its properties. 
\subsection{Regeneration Cycle}\label{se:regeneration cycle}
In the following, by the \textit{cost incurred in time slots $t_1,t_1 +1,\ldots,t_2$}, we mean the quantity, 
\begin{align}\label{eq:period cost}
\prod_{t=t_1}^{t_2}\exp\left(\theta \sum_{n=1}^{N} \mathds{1}\left\{Y_n(t)=\tau_n \right\}\right).
\end{align}
The \textit{regeneration point} of interest to us is defined as the time epoch when the system hits the state $(1,0)$, i.e., $Y(t)=(1,0)$. 

A \textit{regeneration cycle} is the time interval between two successive regeneration points. For any stationary policy $f$, 
let $v_{\text{cycle}}$ be the \textit{cost incurred in a regeneration cycle} (recall \eqref{eq:period cost}), and let $l_\text{cycle}$ be the \textit{length of the regeneration cycle}. Since $f$ is a stationary policy, $v_{\text{cycle}}$ and $l_\text{cycle}$ are random variables which are  i.i.d. in different regeneration cycles. Thus, as in renewal theory, we have, 
\begin{align}\label{eq:cycle cost to average cost}
\nonumber
J\left(f,\mathbf{x}\right)&=\lim_{T\rightarrow\infty} \frac{1}{\theta}\cdot \frac{1}{T}
\ln V_T^{f}(1,0)\\ \nonumber
&=\lim_{T\rightarrow\infty} \frac{1}{\theta} \frac{1}{T} \ln \mathrm{E}\left[\prod_{j=1}^{M_T^\text{(cycle)}}v_\text{cycle}^{(j)} \right]\\ \nonumber
&= \lim_{T\rightarrow\infty} \frac{1}{\theta} \frac{M_T^{(\text{cycle})}}{T} \ln \mathrm{E}\left[v_\text{cycle}\right]\\ &= \frac{1}{\theta} \frac{\ln \mathrm{E}\left[v_\text{cycle}\right]}{\mathrm{E}\left[l_\text{cycle} \right]}, \forall \mathbf{x}\in\mathbb{Y}.
\end{align}
where, the first equality follows from \eqref{eq:long term cost define}; since $v_\text{cycle}^{(j)}$ denotes the cost incurred during the $j$-th regeneration cycle, the second equality follows from the definition of $V_T^\pi(\mathbf{x})$ in \eqref{eq:cost define MDP2 V_T^Pi} with $M_T^{(\text{cycle})}$ denoting the total number of regeneration cycles during $T$ time slots, and the third equality holds since $v_\text{cycle}^{(j)},\forall j$ are i.i.d..

Result \eqref{eq:cycle cost to average cost}  reduces the analysis of the long-term average cost to the analysis of the expected cost and expected length of a regeneration cycle. Thus, it facilitates the following discussions. 

\subsection{SS-Point and SS-Period}\label{se:SS point}
Define the \textit{SS-points} as the time slots such that the packet-transmissions in the two successive time slots preceding this time-slot are both successful. See the examples in Fig.  \ref{fg:SS-points}. More formally, 
\begin{align}\label{eq:SS point define}
\tau^{ss}_j :& =
\begin{cases}
&\min\{t:t>0 \mbox{ and slots } t-1,t-2\mbox{ have}\\ 
& \quad\quad \mbox{ successful transmissions}\}   \mbox{ for } j=1\\ 
& \min\{t:t>\tau^{ss}_{j-1} \mbox{ and slots } t-1,t-2\mbox{ have}\\
& \quad\quad \mbox{ successful transmissions}\} \mbox{ for } j=2,3,\ldots
\end{cases}
\end{align}
Thus, $\tau^{ss}_j$ is the $j$-th SS-point. 
\begin{figure}[!t]
	\centering
	\includegraphics[width=0.5\textwidth]{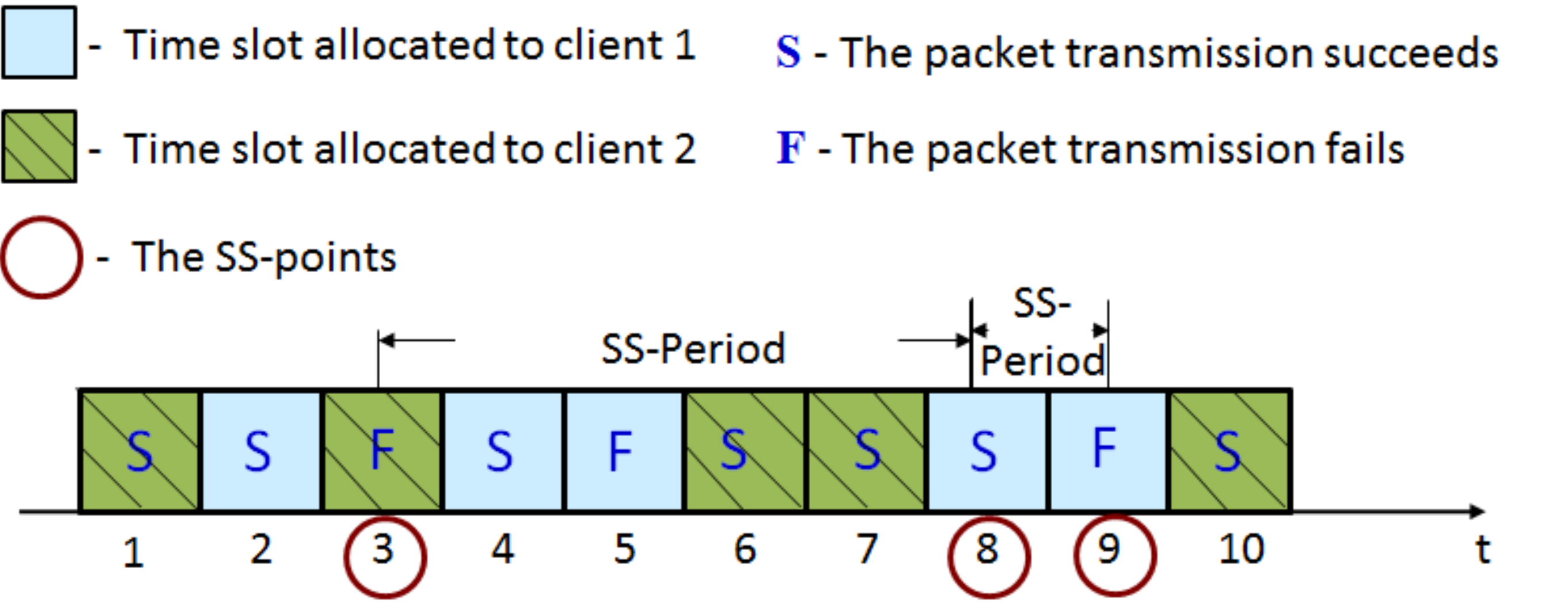}
	\caption{The SS-points and SS-periods are illustrated in a two-client scenario. (We arbitrarily allocate the time slots here just to give an example.)}
	\label{fg:SS-points}
\end{figure}

Define the 
time interval between two successive SS-points as an \textit{SS-period}, and let $v_\text{ss}(\mathbf{x})$ be \textit{the cost \eqref{eq:period cost} incurred during an SS-period} when the system state at the beginning of the SS-period is $\mathbf{x}$. Under the application of a stationary policy  the random variable $v_\text{ss}(\mathbf{x})$ is i.i.d. across different SS-periods for each fixed $\mathbf{x}$. 

It directly follows that, under an arbitrary NE policy (recall Appendix \ref{se:policies of interest}),  
\begin{align}\label{eq:ss temp 1}
\mathrm{P}\left(v_\text{ss}(\mathbf{x})>1 \right)=O(\epsilon),  
\end{align}
which is obtained by noting that: 
\begin{enumerate}[i)]
\item $v_\text{ss}(\mathbf{x})>1$ only if for some client $n$, we have $Y_n( t )=\tau_n$ for some time-slot $t$ in this SS-period.
\item However, $Y_n( t )=\tau_n$ for some time-slot $t$ in this SS-period is possible only if the length of this SS-period is $>1$ (which happens with probability $O(\epsilon)$).
\end{enumerate}
The statement i) follows the definition of $v_\text{ss}(\mathbf{x})$, while the statement ii) holds because NE policies do not serve client $2$ when the system state is $(\tau_1,0)$, and do not serve client $1$ when the state is $(0,\tau_2)$, and thus any possible starting state $\mathbf{x}$ of an SS-period satisfies $x_n<\tau_n,n=1,2$.

Further, if $P_{\mathbf{x}}^{(\text{ss})}$ is the probability that in a regeneration cycle (i.e., time between two successive hits of the state $(1,0)$, recall Section \ref{se:regeneration cycle}), there is at least one SS-period in which the starting state is $\mathbf{x}$, then,
\begin{align}\label{eq:ss temp 2}
\mathrm{E}\left[v_\text{cycle} \right]&=1+\sum_{ \mathbf{x} }P_{\mathbf{x}}^{(\text{ss})}\left(\mathrm{E}\left[v_\text{ss}\left(\mathbf{x}  \right) \right]-1\right)+o(\epsilon^k),
\end{align}
where $k\geq 1$ is an integer such that
$d_1 \epsilon^k \leq \mathrm{E}\left[v_\text{cycle} \right]-1 \leq d_2 \epsilon^k  $ for some $d_1,d_2>0$ when $\epsilon$ is sufficient small. This is obtained by noting that, 
\begin{enumerate} 
\item[1)] A regeneration cycle consists only of SS-periods. 
\item[2)] For any regeneration cycle, $v_\text{cycle}>1$ only if at least one of the SS-periods included in this cycle has    $v_\text{ss}(\cdot)>1$. 
\item[3)] The probability that two or more SS-periods incur a cost $v_\text{ss}(\cdot)>1$ is much less than the probability that only one of these SS-periods incurs a cost $v_\text{ss}(\cdot)>1$, when $\epsilon$ is small enough. The probability of this event being small follows \eqref{eq:ss temp 1}.
\end{enumerate}
In the above, the proof of statements $1$) and $2$) is direct and omitted.
It should be noted that the technique of ignoring events having relatively small probabilities, as employed in the proofs of statement $3$) and \eqref{eq:ss temp 2}, is frequently used in the remaining of this paper. 
These results facilitate our following analyses.

Now, we consider a \textit{regeneration cycle consisting of only successful transmissions}. Denote $\mathbb{X}_\text{ss}$ as the set of all the system states hit during such a regeneration cycle. 
\begin{lemma}\label{th:ss Lemma}
The following result holds:  
\begin{enumerate}[1)]
\item For an arbitrary NE policy, 
\begin{align}\label{eq:ss cost to cycle cost inequality}
\mathrm{E}\left[v_\text{cycle} \right]\geq 1 + \sum_{ \mathbf{x} \in \mathbb{X}_{ss} }\left(\mathrm{E}\left[v_\text{ss}\left(\mathbf{x}  \right) \right]-1\right)+o(\epsilon^k),
\end{align}
where $k$ is an integer such that  $d_1 \epsilon^k \leq \mathrm{E}\left[v_\text{cycle} \right]-1 \leq d_2 \epsilon^k  $ for some $d_1,d_2>0$ when $\epsilon$ is sufficient small.
\item When the MLG policy is applied, 
\begin{align}\label{eq:ss cost to cycle cost marginal switching}
\mathrm{E}\left[v_\text{cycle} \right]= 1 + \sum_{ \mathbf{x} \in \mathbb{X}_{ss} }\left(\mathrm{E}\left[v_\text{ss}\left(\mathbf{x}  \right) \right]-1\right)+o(\epsilon^k),
\end{align}
where $k$ is similarly defined as for \eqref{eq:ss cost to cycle cost inequality}. 
\item 
Further, when the MLG policy is applied, and if  $\Delta \geq 2$, 
\begin{align}
\label{eq:MLG Xss set}
& \mathbb{X}_\text{ss}=\{(1,0), (0,x_2),  \forall x_2=0,\cdots,\Delta\!-\!1 \}
\\ \label{eq:MLG cycle length}
&\mathrm{E}[l_\text{cycle}] = \Delta +O(\epsilon); 
\\ \label{eq:MLG ss period cost 1}
&\mathrm{E}[v_\text{ss}(1,0)]=1+b_1^{\tau-1}\epsilon ^{\tau-1}\left(\exp(\theta)-1 \right)+O(\epsilon^{\tau}), 
\\ \label{eq:MLG ss period cost 2}
&\mathrm{E}[v_\text{ss}(0,x_2)]=1+O(\epsilon^{\tau}), \forall x_2 = 0,\cdots,\Delta-1.
\end{align}
Similar results can be obtained for $\Delta=0,1$.
\end{enumerate}
\end{lemma}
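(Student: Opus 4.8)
The plan is to prove the three parts in the order 3), then 2), then 1): once the behaviour of $\mathbb{X}_\text{ss}$, $\mathrm{E}[l_\text{cycle}]$ and $\mathrm{E}[v_\text{ss}(\cdot)]$ under $f^\text{MLG}$ has been pinned down, parts 2) and 1) follow by comparing the general SS-period expansion \eqref{eq:ss temp 2} against the all-success decomposition. In every case the starting point is the renewal identity \eqref{eq:cycle cost to average cost} together with \eqref{eq:ss temp 2}, so the whole task reduces to evaluating $\mathrm{E}[v_\text{cycle}]$ and $\mathrm{E}[l_\text{cycle}]$ to the order in $\epsilon$ demanded by the statement.

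\emph{Part 3.} First I would trace the deterministic all-success trajectory from the regeneration state $(1,0)$ under $f^\text{MLG}$. At $(1,0)$ the least time-to-go belongs to client $1$ (since $\tau_1-1<\tau_2$ when $\Delta\ge0$) and the exception rule only concerns client $2$, so a success resets client $1$ and the chain walks through $(0,1),(0,2),\ldots,(0,\Delta-1)$; at $(0,\Delta-1)$ the exception forces a transmission to client $2$, whose success returns the chain to $(1,0)$. This yields \eqref{eq:MLG Xss set} and a regeneration cycle of length exactly $\Delta$ on the all-success event (which has probability $1-O(\epsilon)$), with all other sample paths contributing only $O(\epsilon)$; hence \eqref{eq:MLG cycle length}. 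For the per-SS-period cost, an SS-period started at $\mathbf{x}\in\mathbb{X}_\text{ss}$ incurs a cost exceeding $1$ only if a run of consecutive failures drives some $Y_n$ up to $\tau_n$ before the SS-period closes (cf.\ \eqref{eq:ss temp 1}), and under $f^\text{MLG}$ such a run keeps the failing transmissions on client $1$. Starting from $(1,0)$, exactly $\tau-1$ failures bring $Y_1$ to $\tau_1$, contributing one factor $e^{\theta}$; this event has probability $(1-p_1)^{\tau-1}=b_1^{\tau-1}\epsilon^{\tau-1}$, while any further $e^{\theta}$ factor, or any hitting of $\tau_2$, costs at least one more failure and is therefore $O(\epsilon^{\tau})$ --- this gives \eqref{eq:MLG ss period cost 1}. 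From $(0,x_2)$ with $x_2\le\Delta-1$ the same bookkeeping shows that reaching any threshold needs at least $\tau$ failures, giving \eqref{eq:MLG ss period cost 2}. In particular $\mathrm{E}[v_\text{cycle}]-1$ is of order $\epsilon^{\tau-1}$, so the exponent $k$ in the statements of 1) and 2) equals $\tau-1$. The cases $\Delta=0,1$ are handled by the identical method, the only change being that the all-success trajectory has length $2$ and $\mathbb{X}_\text{ss}=\{(1,0),(0,1)\}$.

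\emph{Parts 2 and 1.} For part 2 I would compare \eqref{eq:ss temp 2} with the target \eqref{eq:ss cost to cycle cost marginal switching}. For $\mathbf{x}\in\mathbb{X}_\text{ss}$ the all-success cycle visits $\mathbf{x}$, so $P^{(\text{ss})}_\mathbf{x}=1-O(\epsilon)$, and since $\mathrm{E}[v_\text{ss}(\mathbf{x})]-1=O(\epsilon^{\tau-1})$ by part 3), replacing $P^{(\text{ss})}_\mathbf{x}$ by $1$ loses only $O(\epsilon^{\tau})=o(\epsilon^{k})$. For $\mathbf{x}\notin\mathbb{X}_\text{ss}$ the crucial claim is that $\mathbf{x}$ can be the \emph{start} of an SS-period only inside a regeneration cycle containing at least two failures: a lone failure occurs at a state that is still on the all-success path (hence in $\mathbb{X}_\text{ss}$) and is the closing slot of the SS-period it belongs to, while the two successful slots needed to reach the next SS-point return the least-time-to-go rule onto the all-success path. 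A sharper count then shows that if, in addition, a threshold lies only $j$ failures away from such an $\mathbf{x}$, the cycle must contain at least $\tau$ failures altogether, so $P^{(\text{ss})}_\mathbf{x}(\mathrm{E}[v_\text{ss}(\mathbf{x})]-1)=O(\epsilon^{\tau})=o(\epsilon^{k})$; summing over the finitely many such $\mathbf{x}$ yields \eqref{eq:ss cost to cycle cost marginal switching}. Part 1 is then the one-sided version for an arbitrary NE policy: in \eqref{eq:ss temp 2} every term is nonnegative, so we may discard all $\mathbf{x}\notin\mathbb{X}_\text{ss}$ (with $\mathbb{X}_\text{ss}$ now the all-success trajectory of that policy), and for the remaining $\mathbf{x}$ use $P^{(\text{ss})}_\mathbf{x}\ge 1-O(\epsilon)$ (the probability that the whole cycle consists of successes only); since $\max_\mathbf{x}(\mathrm{E}[v_\text{ss}(\mathbf{x})]-1)$ is of the same order as $\mathrm{E}[v_\text{cycle}]-1=\Theta(\epsilon^{k})$, the $O(\epsilon)$ loss is $O(\epsilon^{k+1})=o(\epsilon^{k})$, giving \eqref{eq:ss cost to cycle cost inequality}.

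\emph{The main obstacle} is to make the $o(\epsilon^{k})$ bookkeeping rigorous rather than heuristic. Two points need care. First, the ``a single failure is repaired before the next SS-point'' claim must be proved uniformly in $\Delta$: the small cases $\Delta=2,3$ are the delicate ones, because then the exception state $(0,\Delta-1)$ lies on or adjacent to the all-success path and one must verify that the least-time-to-go rule still steers the chain back before any SS-period outside $\mathbb{X}_\text{ss}$ is opened. Second, one needs a finite bound on $\mathrm{E}[v_\text{ss}(\mathbf{x})]$ that is uniform in the starting state and in the NE policy, so that the tails of long (low-probability) SS-periods cannot swamp the $\epsilon$-expansion; this is exactly where the hypothesis $\theta<\theta_\text{th}$ of Theorem~\ref{th:optimality equation}, together with the Doeblin / spectral-radius machinery of Section~\ref{se:risk-sensitive MDP}, is invoked.
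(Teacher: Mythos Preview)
Your proposal is correct and follows exactly the route the paper indicates: the paper's own proof is entirely omitted, stating only that the results follow from \eqref{eq:ss temp 1}, \eqref{eq:ss temp 2}, and the definition of $\mathbb{X}_\text{ss}$, and your argument is a careful unpacking of precisely that machinery. One minor remark: the all-success trajectory you trace, $(1,0)\to(0,1)\to\cdots\to(0,\Delta-1)\to(1,0)$, never visits $(0,0)$, so it does not literally reproduce \eqref{eq:MLG Xss set} as printed (the inclusion of $(0,0)$ there appears to be a slip in the paper), but since $\mathrm{E}[v_\text{ss}(0,0)]-1=O(\epsilon^{\tau})$ this is immaterial to everything downstream.
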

\begin{proof}
These results follow from \eqref{eq:ss temp 1}, \eqref{eq:ss temp 2}, and the definition of $\mathbb{X}_\text{ss}$. 
The proof is straightforward, and the details are omitted.
\end{proof}

\section{\!\!\! Asymptotically Optimal Policies} \label{se:Policy Design}

\subsection{The Two-Client Scenario}\label{se:2-client scenario}
Consider the case where there are two-clients sharing an AP, with channel reliabilities $p_1=1-b_1\epsilon$, $p_2=1-b_2\epsilon$, and thresholds $\tau_1=\tau$, $\tau_2=\tau+\Delta$. Without loss of generality assume  $\Delta \geq 0$. Recall the definition of the MLG policy in Section \ref{se:MLG}, and note that $b_1,b_2,\epsilon>0$. The following theorem establishes the optimality of the MLG policy.

\begin{theorem}\label{th:MLG policy}
The following results hold:
\begin{enumerate}[1)]
\item The risk-sensitive cost under MLG policy  is, $\forall \mathbf{x}$,
\begin{align*}
J(f^\text{MLG},\mathbf{x})=
\begin{cases}
 A_0 \epsilon^{\tau-1}+O(\epsilon^{\tau})
& \mbox{if } \Delta=0 \\
\frac{\mathrm{e}^\theta\!-\!1  }{2\theta}\epsilon^{\tau-1} \sum_{j=0}^{\tau\!-\!1}b_1^j b_2^{\tau\!-\!1\!-\!j} \!+\!O(\epsilon^\tau\!)\! & \mbox{if } \Delta =1 \\
\frac{\mathrm{e}^\theta-1 }{\theta \Delta}b_1^{\tau-1} \epsilon^{\tau-1}+O(\epsilon^{\tau}) & \mbox{if } \Delta \geq 2 \\
\end{cases}
\end{align*}
where
\begin{align*}
A_0=  \frac{\mathrm{e}^\theta-1}{\theta} \sum_{j=1}^{\tau-2}b_1^j b_2^{\tau-1-j}
+\frac{b_1^{\tau\!-\!1}+b_2^{\tau-1}}{2\theta} \left(\mathrm{e}^{2}-1 \right).
\end{align*}

\item \!The optimal cost $J^\star(\mathbf{x})\!:=\min_f \! J(f,\mathbf{x})$ has a lower bound,  
\begin{align*}
J^\star(\mathbf{x})\geq
\left\{
\begin{array}{ll}
A_0 \epsilon^{\tau-1}+o(\epsilon^{\tau-1}) & \mbox{if } \Delta =0 \\
\frac{\mathrm{e}^\theta-1}{2\theta} A_1  \epsilon^{\tau-1}+o(\epsilon^{\tau-1}) & \mbox{if } \Delta=1 \\
\frac{\mathrm{e}^\theta-1  }{\theta} A_2 \epsilon^{\tau-1} +o(\epsilon^{\tau-1})& \mbox{if } \Delta \geq 2
\end{array}
\right.
\end{align*}
where
\begin{align*}
A_0 & \mbox{ is as in the  statement above} \\
b_{\min}&:=\min\{b_1,b_2\} \\
A_1&:=b_1^{\tau-1}+\left(\tau-1\right)b_{\min}^{\tau-1}\\
A_2 &:=\min \bigg\{\frac{b_1^{\tau-1}}{\Delta},  \frac{b_1^{\tau-1}+(\tau_1-1)b_{\min}^{\tau-1}}{\Delta+1}, \\
& \quad
\frac{b_1^{\tau-1}+(\tau_1\!-\!1)b_{\min}^{\tau-1}+\sum_{j=1}^{\tau-1}b_2^j b_1^{\tau-1-j}}{\Delta+2}
\bigg\}.
\end{align*}
\item \!\!Thus, it follows from $1$) and $2$) above that the MLG policy is optimal in the high reliability asymptotic regime (i.e., small $\epsilon$) if any of the following conditions is satisfied:
\begin{enumerate}
\item[(i)] $\Delta=0$; \quad \quad
\quad \quad \quad \quad (ii) $\Delta=1$, and $b_1\leq b_2$;
\item[(iii)] $\Delta\geq 2$, and
$b_1^{\tau-1}\leq \Delta (\tau-1)b_2^{\tau-1}$.
\end{enumerate}
\end{enumerate}
\end{theorem}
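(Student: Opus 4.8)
All three parts hinge on the renewal identity \eqref{eq:cycle cost to average cost}, $J(f,\mathbf{x})=\frac1\theta\,\frac{\ln\mathrm{E}[v_\text{cycle}]}{\mathrm{E}[l_\text{cycle}]}$, together with the fact that, $\epsilon$ being small, $\mathrm{E}[v_\text{cycle}]-1$ is tiny and $\ln\mathrm{E}[v_\text{cycle}]=\bigl(\mathrm{E}[v_\text{cycle}]-1\bigr)\bigl(1+o(1)\bigr)$. For Part~1 with $\Delta\ge2$ this is a direct substitution: Lemma~\ref{th:ss Lemma} supplies $\mathbb{X}_\text{ss}$ via \eqref{eq:MLG Xss set}, $\mathrm{E}[l_\text{cycle}]=\Delta+O(\epsilon)$ via \eqref{eq:MLG cycle length}, and the per–SS-period costs \eqref{eq:MLG ss period cost 1}--\eqref{eq:MLG ss period cost 2}; feeding these into \eqref{eq:ss cost to cycle cost marginal switching} gives $\mathrm{E}[v_\text{cycle}]=1+(\mathrm{e}^\theta-1)b_1^{\tau-1}\epsilon^{\tau-1}+o(\epsilon^{\tau-1})$, and dividing $\frac1\theta\ln\mathrm{E}[v_\text{cycle}]$ by $\mathrm{E}[l_\text{cycle}]$ yields the stated formula. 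For $\Delta=0,1$ I would first establish the analogue of Lemma~\ref{th:ss Lemma}: under MLG the all-success regeneration cycle is the length-$2$ cycle $(1,0)\to(0,1)\to(1,0)$, so $\mathbb{X}_\text{ss}=\{(1,0),(0,1)\}$, and $\mathrm{E}[v_\text{ss}(1,0)]-1$ and $\mathrm{E}[v_\text{ss}(0,1)]-1$ are computed to leading order by enumerating the sample paths of an SS-period that drive some $Y_n$ up to $\tau_n$ with the fewest transmission failures, using that a non-trivial SS-period begins with a failure and that successes cannot fall in two consecutive slots of an SS-period. The $\tau-1$ critical failures can be split between the clients in several patterns (a run of client-$2$ failures, one client-$2$ success that resets $Y_2$, then client-$1$ failures), which is where the sums $\sum_j b_1^j b_2^{\tau-1-j}$ come from; summing over $\mathbb{X}_\text{ss}$, applying $\ln(1+\cdot)$, and dividing by $2\theta$ gives the $\Delta=0,1$ formulae.

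\emph{Part~2 (lower bound).} Fix any stationary policy; by Appendix~\ref{se:policies of interest} it is enough to treat NE policies. Combining \eqref{eq:ss cost to cycle cost inequality} with $\ln(1+z)\ge z-z^2$ and the fact that $\mathrm{E}[v_\text{cycle}]-1=\Theta(\epsilon^k)$ with $1\le k\le\tau-1$ (the upper bound because, starting from $(1,0)$, \emph{every} policy admits an order-$\epsilon^{\tau-1}$ sample path on which some count reaches its threshold within a cycle) gives
\begin{align*}
J(f,\mathbf{x})\ \ge\ \frac{1}{\theta\,\mathrm{E}[l_\text{cycle}]}\sum_{\mathbf{x}\in\mathbb{X}_\text{ss}}\bigl(\mathrm{E}[v_\text{ss}(\mathbf{x})]-1\bigr)+o(\epsilon^{\tau-1}) ;
\end{align*}
if $k<\tau-1$ the right-hand side is $\omega(\epsilon^{\tau-1})$ and the claim is trivial, so assume $k=\tau-1$. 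I would then classify the admissible all-success cycles of $f$, i.e. the deterministic path from $(1,0)$ back to $(1,0)$ under successive successes: any such cycle visiting a state $\mathbf{x}$ with $\min_n(\tau_n-x_n)\le\tau-2$ already forces $\mathrm{E}[v_\text{ss}(\mathbf{x})]-1$, hence $J(f,\mathbf{x})$, to be of strictly larger order than $\epsilon^{\tau-1}$, so by the transition structure \eqref{eq:system evolve} the only competitive candidates are $(1,0)\to(0,1)\to\cdots\to(0,m)\to(1,0)$ with $m\le\Delta+1$ (plus the small-$\Delta$ degenerations). For each such cycle I would bound every summand from below over all NE continuations: from $(1,0)$ the only order-$\epsilon^{\tau-1}$ path is $\tau-1$ consecutive failures of client~$1$, contributing at least $(\mathrm{e}^\theta-1)b_1^{\tau-1}\epsilon^{\tau-1}$; from a state $(0,x_2)$ that is exactly $\tau-1$ steps away from putting $Y_2$ at $\tau_2$ the policy may choose which client fails at each of the $\tau-1$ critical slots, so that summand is at least $(\mathrm{e}^\theta-1)(\tau-1)b_{\min}^{\tau-1}\epsilon^{\tau-1}$ (the $b_{\min}$ because a cost-minimizing policy serves the more reliable client, the factor $\tau-1$ a count of positions), while states farther from every threshold contribute only $o(\epsilon^{\tau-1})$. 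Dividing the resulting sums by $\mathrm{E}[l_\text{cycle}]=L+O(\epsilon)$ and minimizing over the finitely many admissible cycle lengths $L$ and over the two roles that clients~$1$ and~$2$ can play yields $A_0$, $A_1$, and the three-term minimum defining $A_2$ (its three arguments being the lengths $L=\Delta,\Delta+1,\Delta+2$).

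\emph{Part~3 and the main obstacle.} Part~3 is immediate: MLG is a stationary policy, so $J^\star(\mathbf{x})\le J(f^\text{MLG},\mathbf{x})$, while Part~2 gives $J^\star(\mathbf{x})$ the stated lower bound; hence MLG is asymptotically optimal exactly when the leading constant of Part~1 coincides with that of Part~2, and each of (i)--(iii) is precisely that elementary inequality (for $\Delta\ge2$, $b_1^{\tau-1}/\Delta\le A_2$ is, after clearing denominators, $b_1^{\tau-1}\le\Delta(\tau-1)b_2^{\tau-1}$). The hard part is Part~2: proving that every asymptotically competitive NE policy has one of the few cycle shapes above (ruling out all other NE all-success cycles because they pass within $\tau-2$ of some threshold), and obtaining the \emph{tight} leading-order lower bounds on $\mathrm{E}[v_\text{ss}(\mathbf{x})]-1$ uniformly over NE continuations — this rests on a careful combinatorial analysis of the minimal-failure sample paths of an SS-period (the interleaving of isolated successes and failures, the forced initial failure, and the identity of the client a minimizing policy schedules at each step). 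Part~1 for $\Delta=0,1$ needs the same SS-period path count, but only for the single fixed policy $f^\text{MLG}$, so it is the easier half.
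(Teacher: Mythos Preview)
Your proposal is correct and follows essentially the same approach as the paper: both use the renewal identity \eqref{eq:cycle cost to average cost} together with Lemma~\ref{th:ss Lemma} for Part~1, and for Part~2 both lower-bound $\mathrm{E}[v_\text{ss}(\mathbf{x})]$ state by state (the paper's case split (a)/(b) for $(1,0)$ and its list i.--v.\ is exactly your ``minimal-failure sample path'' analysis), combine with \eqref{eq:ss cost to cycle cost inequality}, and then optimize over the few cycle shapes that survive. Your presentation is in fact a bit more explicit than the paper's about why the three terms in $A_2$ correspond to all-success cycle lengths $L=\Delta,\Delta+1,\Delta+2$; the paper simply says ``combining these results with \eqref{eq:ss cost to cycle cost inequality} and \eqref{eq:cycle cost to average cost}'' and leaves that minimization implicit.
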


\begin{proof}
We will only consider the case when  $\Delta\geq 2$, since the analyses for the cases when $\Delta=1$ or $0$ follows similar arguments.

By Lemma \ref{th:ss Lemma}, under the application of the MLG policy, we have,
\begin{align}\label{eq:MLG3}
\mathrm{E}[v_\text{cycle}]=1+b_1^{\tau-1}\epsilon ^{\tau-1}\big(\exp(\theta)-1 \big)+O(\epsilon^{\tau}).
\end{align}
Thus, statement 1) follows by combining \eqref{eq:cycle cost to average cost}, \eqref{eq:MLG cycle length} and \eqref{eq:MLG3}.

To prove statement 2), we begin by deriving the lower bound of $\mathrm{E}[v_\text{ss}(\mathbf{x})]$ for any system state $\mathbf{x}$ of the form $(\cdot,0)$ or $(0,\cdot)$. Note that, any possible starting state of an SS-period is of this form. In the following, we focus on the analysis of $\mathrm{E}[v_\text{ss}(1,0)]$, since the analysis of the cost for the SS-period starting with any other state follows similar arguments. 

Consider the evolution of system over an SS-period starting with state $(1,0)$ under the application of an arbitrary stationary policy. Then we have the following two possibilities, 
\begin{enumerate}[(a)]
\item The policy serves client $2$ before the earlier of these two events: $i$) a successful packet delivery for client $1$, $ii$) the system hits the value $(\tau,\tau-1)$. 
Under such a policy, it can be shown that a cost $v_\text{ss}(1,0)>1$ is incurred with a probability $>d \epsilon^{\tau-2}$ for some $d>0$.
\item The policy does not serve client $2$ before the earlier of the following two events: $i$) a successful packet delivery for client $1$, $ii$) the system hits the value $(\tau,\tau-1)$. 
Then, if failures occur in all of the first $\tau-1$ time slots for the SS-period, the state $(\tau,\tau-1)$ will be hit. Thus, a cost  $v_\text{ss}(1,0)\geq \exp(\theta)$ is incurred with a probability $ \geq b_1^{\tau-1}\epsilon^{\tau-1}$.

\end{enumerate}
Consequently it follows from a) and b) above,  $$\mathrm{E}[v_\text{ss}(1,0)]\geq 1+b_1^{\tau-1}\epsilon^{\tau-1}+o(\epsilon^{\tau-1}).$$
Similar arguments lead us to conclude the following lower bounds on $\mathrm{E}[v_\text{ss}(\mathbf{x})]$ under the application of an arbitrary stationary policy, (recall that $\mathbf{x}$ should be of the form $(0,\cdot)$ or $(\cdot,0)$ since it is a possible starting state of an SS-period)
\begin{enumerate} [i.]
\item $\forall \mathbf{x}\in \left\{(0,x_2)|x_2\leq \Delta-1 \right\}$, $ \mathrm{E}[v_\text{ss}(\mathbf{x}) ]\geq 1$.
\item $\forall \mathbf{x}\in \left\{(0,x_2)|x_2\geq  \Delta+2 \right\}\bigcup \left\{(x_1,0)|x_1 \geq 2 \right\}$, $ \mathrm{E}[v_\text{ss}(\mathbf{x}) ] \geq 1+d \epsilon^{\tau-2}+o(\epsilon^{\tau-2})$, with some $d>0$.
\item $\mathrm{E}[v_\text{ss} (1,0) ]\geq 1+b_1^{\tau-1}\epsilon ^{\tau-1}\left(\exp(\theta)-1 \right)+o(\epsilon^{\tau-1})$.
\item  $\mathrm{E}[v_\text{ss}(0,\Delta+1) ]\geq 1+\sum_{j=1}^{\tau-1}b_2^j b_1^{\tau-1-j}\epsilon^{\tau-1}+o(\epsilon^{\tau-1})$.
\item   $\mathrm{E}[v_\text{ss}(0,\Delta)] \geq 1+ (\tau-1)b_{\min}^{\tau-1}\epsilon^{\tau-1}+o(\epsilon^{\tau-1})$.
\end{enumerate}
By combining these results with the inequality \eqref{eq:ss cost to cycle cost inequality} and using equation \eqref{eq:cycle cost to average cost}, we obtain the second statement.

The third statement is a simple consequence of the first two statements.
\end{proof}

In Theorem \ref{th:MLG policy}, the first statement characterizes the risk-sensitive cost of the MLG policy, while the second provides a lower bound on the cost for any stationary policy. The third statement provides three sufficient conditions under which the MLG policy is asymptotically optimal. 
These conditions are related to the difference between the inter-delivery thresholds for different clients, and the difference in their relative failure probabilities.

\subsection{The General Case: N Clients in the High-Reliability Regime}\label{se:multi-client subsection}
Now, we consider the general case where there are $N$ clients sharing an AP, with the channel reliability of the $n$-th client being $p_n=1-b_n \epsilon$, where $\epsilon>0$ is a small quantity and $b_n>0$. The inter-delivery threshold of client $n$ is $\tau_n$. It is assumed that $N\leq \tau_1\leq \tau_2 \leq \cdots \leq \tau_N$, with out loss of generality. 

Since Theorem \ref{th:optimality equation} implies that there exists a stationary policy that is optimal for the MDP-2, we focus exclusively on stationary policies. Now, we obtain characterization of the optimal policy. 

We define a \textit{regeneration point} as the time epoch when the system hits the state $(0,1,\cdots,N-1)$, i.e., time $t$ is a regeneration point iff. $Y(t)=(0,1,\cdots,N-1)$. 
(Recall that for the case of $2$ clients as discussed in Section \ref{se:regeneration cycle}, the regeneration point is the epoch when the state $(1,0)$ is hit.) 
The 
\textit{regeneration cycle} 
is the time interval between two successive regeneration points. 
Now, consider a \textit{regeneration cycle consisting of only successful transmissions}, and denote by $\mathbb{X}_\text{sN}$ the sequence of system states hit during such a regeneration cycle. Note that $\mathbb{X}_\text{sN}$ is a deterministic sequence with a given stationary policy.
Let $|\mathbb{X}_\text{sN}|$ be the length of this sequence, and $\mathbb{X}_\text{sN}(j)$ be the $j$-th state in this sequence.

We also define an \textit{SN-point} as the time slot when the packet transmissions in the $N$ successive time slots preceding  this time-slot 
are all successful. (This is similar to the definition of SS-point, in \eqref{eq:SS point define}.)
The \textit{SN-period} is the time interval between two successive SN-points, and let $v_\text{sN}(\mathbf{x})$ denote the cost \eqref{eq:period cost} incurred during an SN-period when the system state at the beginning of the SN-period is $\mathbf{x}$, similar to the two-client case. 

We further consider a time interval comprising of no less than $N$ time-slots, which starts when the system state assumes the value $\mathbf{x}$ and ends when the nearest SN-point (such that the length of the period $\geq N$) is hit. Denote by  $\tilde{v}_\text{sN}(\mathbf{x})$ the cost \eqref{eq:period cost} incurred during such a period. 
(Note that this is different from ${v}_\text{sN}(\mathbf{x})$ because  ${v}_\text{sN}(\mathbf{x})$ is the cost incurred during an SN-period, and that an SN-period may have a length strictly less than $N$.  An example of SN-period with length $1$ is shown in Fig. \ref{fg:SS-points} for the two-client scenario.)

\begin{lemma}\label{th:MC lemma}
	The optimal policy is a member of the set,  
	\begin{align*}
	\{f: J(f,\mathbf{x})=O(\epsilon), \forall  \mathbf{x}\in\mathbb{Y}\}.  
	\end{align*}
	Further, for any policy $f$ in this set, the following results hold: 
	\begin{enumerate}[1)]
		\item  The risk-sensitive cost is,  
		\begin{align}\label{eq:MC risk sensitive cost form}
		J(f,\mathbf{x})=\frac{1}{|\mathbb{X}_\text{sN}|}\sum_{ j=1}^{|\mathbb{X}_\text{sN}|}\left(\mathrm{E}\left[v_\text{sN}\big(\mathbb{X}_\text{sN}(j) \big) \right]\!-\!1\right)+o(\epsilon^k), 
		\end{align}
		for any system state $\mathbf{x}$, where $k\geq 1$ is an integer such that  $d_1 \epsilon^k \leq J(f,\mathbf{x}) \leq d_2 \epsilon^k  $ for some $d_1,d_2>0$ when $\epsilon$ is sufficient small.
		\item For any possible starting state $\mathbf{x}$ of an SN-period, we have, 
		\begin{align}\label{eq:MC tilta form}
		\mathrm{E}[\tilde{v}_\text{sN}(\mathbf{x})]=1+ \sum_{j=0}^{N-1} \left(\mathrm{E}\left[v_\text{sN} ({\mathcal{S}}^j(\mathbf{x})) \right]-1 \right)+o(\epsilon^k), 
		\end{align}
		where $\mathcal{S}^1(\mathbf{x})$ is the state that succeeds state $\mathbf{x}$ in the event of a successful transmission when  policy $f$ is applied, i.e., 
		\begin{align*}
		&\mathcal{S}^1\!(\mathbf{x}\!)\!:=\!(x_1\!+\!1,\!\cdots\!, x_{\!f(\mathbf{x})\!-\!1}\!+\!1,0,x_{\!f(\mathbf{x})\!+\!1}\!+\!1,\!\cdots\!,x_{\!N}\!+\!1) \!\wedge \!\boldsymbol{\tau}; \\
		&\mbox{also } \mathcal{S}^{j+1}(\mathbf{x}):=\mathcal{S}\left(\mathcal{S}^j(\mathbf{x})\right),j=1,2,\cdots ; \mathcal{S}^0(\mathbf{x}):=\mathbf{x}, 
		\end{align*}
		and $k$ is an integer such that  $d_1 \epsilon^k \leq \mathrm{E}\left[\tilde{v}_\text{sN}(\mathbf{x}) \right]-1 \leq d_2 \epsilon^k  $ for some $d_1,d_2>0$ when $\epsilon$ is sufficient small.
	\end{enumerate}
\end{lemma}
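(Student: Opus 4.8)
The plan is to transport the two-client regeneration machinery of Sections~\ref{se:regeneration cycle}--\ref{se:SS point} and Lemma~\ref{th:ss Lemma} to $N$ clients, with the state $(0,1,\dots,N-1)$ playing the role of the regeneration state $(1,0)$ and SN-points playing the role of SS-points. Two structural facts make this go through: the renewal identity \eqref{eq:cycle cost to average cost} holds verbatim for any stationary policy once the regeneration state is redefined (cycle costs and lengths are still i.i.d.), and $(0,1,\dots,N-1)$ is itself an SN-point, since $Y(t)=(0,1,\dots,N-1)$ forces slots $t-N,\dots,t-1$ to have successfully delivered clients $N,N-1,\dots,1$; hence a regeneration cycle is always a concatenation of whole SN-periods. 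To see the optimal policy lies in $\{f:J(f,\mathbf{x})=O(\epsilon)\}$, I would first exhibit one member: let $f_0$ serve the least-recently-served client. Because $N\le\tau_1\le\cdots\le\tau_N$, along any all-successful trajectory $f_0$ reproduces the round-robin order $N,N-1,\dots,1$ that generated $(0,1,\dots,N-1)$, so it keeps every coordinate of $Y(t)$ at most $N-1<\tau_n$; thus the all-successful regeneration cycle incurs $v_\text{cycle}=1$, an event of probability $1-O(\epsilon)$. Together with the simultaneous Doeblin bound $\mathrm{E}_{f_0}[l_\text{cycle}]\le K$ from the proof of Theorem~\ref{th:optimality equation} and $\theta<\theta_\text{th}$ (which gives $l_\text{cycle}$ a geometric-type tail, so the rare non-clean cycles contribute only $O(\epsilon)$), this yields $\mathrm{E}[v_\text{cycle}]=1+O(\epsilon)$ and hence $J(f_0,\mathbf{x})=O(\epsilon)$ via \eqref{eq:cycle cost to average cost}. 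Since Theorem~\ref{th:optimality equation} furnishes a stationary optimal $f^\star$ with $J(f^\star,\mathbf{x})=J^\star(\mathbf{x})\le J(f_0,\mathbf{x})$, the optimal policy belongs to the set.

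Now fix $f$ in the set and let $k$ be the integer with $d_1\epsilon^k\le J(f,\mathbf{x})\le d_2\epsilon^k$. First, $J(f,\mathbf{x})=O(\epsilon)$ forces the all-successful regeneration cycle of $f$ to touch no threshold (otherwise $v_\text{cycle}\ge e^{\theta}$ with probability $1-O(\epsilon)$ and $J(f,\mathbf{x})$ would be bounded away from $0$), so along that cycle every SN-period has length $1$, the visited states are exactly $\mathbb{X}_\text{sN}(1),\dots,\mathbb{X}_\text{sN}(|\mathbb{X}_\text{sN}|)$, and $\mathrm{E}[l_\text{cycle}]=|\mathbb{X}_\text{sN}|+O(\epsilon)$. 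For statement 1): writing $v_\text{cycle}=\prod_i(1+a_i)$ with $a_i:=v_\text{sN}^{(i)}-1\ge0$ over the SN-periods of a cycle and arguing exactly as in the passage from \eqref{eq:ss temp 1} to \eqref{eq:ss cost to cycle cost marginal switching}, the cross terms (two or more SN-periods accruing excess cost) and the terms from SN-periods started at states off the clean cycle (which require a prior failure, carrying an extra factor of order $\epsilon$) are $o(\epsilon^k)$, leaving $\mathrm{E}[v_\text{cycle}]=1+\sum_j(\mathrm{E}[v_\text{sN}(\mathbb{X}_\text{sN}(j))]-1)+o(\epsilon^k)$; substituting into \eqref{eq:cycle cost to average cost} and using $\ln(1+u)=u+o(u)$ gives \eqref{eq:MC risk sensitive cost form}. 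For statement 2): on the all-successful event (probability $1-O(\epsilon)$) the first SN-point at distance $\ge N$ from a state $\mathbf{x}$ occurs exactly $N$ slots later, with intermediate states $\mathcal{S}^0(\mathbf{x}),\dots,\mathcal{S}^{N-1}(\mathbf{x})$, so $\tilde{v}_\text{sN}(\mathbf{x})$ is the product of $N$ length-$1$ SN-period costs started from those states; any other trajectory contains a failure, and the same bookkeeping (only single-excess SN-period events survive to order $\epsilon^k$) yields \eqref{eq:MC tilta form}.

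The genuinely delicate part, and where the hypotheses bite, is the uniform control of the $o(\epsilon^k)$ remainders. The fact to establish is that, for a policy $f$ in the set, every state that can occur as the start of an SN-period is far from the thresholds: within an SN-period a coordinate close to $\tau_n$ can arise only after a long run of failures, hence deep inside the period rather than at its boundary (an SN-point). Consequently an SN-period launched off the clean cycle can accrue excess cost only via its own long failure run stacked on top of the failure that caused the deviation, so its contribution is of strictly higher order than $\epsilon^k$; likewise two SN-periods simultaneously carrying excess cost is a product of two such rare runs. This is exactly where $N\le\tau_1\le\cdots\le\tau_N$ and the restriction $J(f,\mathbf{x})=O(\epsilon)$ are used, since together they identify $k$ with essentially the shortest run of failures that lets $f$ drive some client from a clean-cycle state to its threshold. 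With this in hand the remaining estimates---the product expansions and the renewal algebra---are the routine ones already behind Lemma~\ref{th:ss Lemma}, and I would omit them as the paper does for the two-client case.
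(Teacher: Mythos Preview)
Your proposal is essentially the same approach as the paper's own proof, which simply refers back to the two-client machinery of Sections~\ref{se:regeneration cycle}--\ref{se:SS point} and equations \eqref{eq:cycle cost to average cost}, \eqref{eq:ss temp 1}, \eqref{eq:ss temp 2}, \eqref{eq:ss cost to cycle cost inequality}, and you have filled in precisely those transported ingredients (the renewal identity for the new regeneration state, the decomposition of a regeneration cycle into SN-periods, and the single-excess bookkeeping). The additional details you supply---exhibiting an explicit $f_0$ to place the optimum in the $O(\epsilon)$ class, and the observation that $Y(t)=(0,1,\dots,N-1)$ forces $N$ prior successes so that regeneration points are SN-points---go beyond what the paper spells out but are fully in the spirit of its one-line proof.
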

\begin{proof}
	The results are obtained using arguments similar to the case of two-clients in the high reliability regime (See Section \ref{se:regeneration cycle}, \ref{se:SS point}, equations \eqref{eq:cycle cost to average cost}
	\eqref{eq:ss temp 1}
	\eqref{eq:ss temp 2}
	\eqref{eq:ss cost to cycle cost inequality}
	).
\end{proof}
One may note that the r.h.s. of \eqref{eq:MC tilta form} is closely related to the r.h.s. of \eqref{eq:MC risk sensitive cost form}, by noting that $\mathcal{S}^j\big(\mathbb{X}_\text{sN}(1)\big)=\mathbb{X}_\text{sN}(j+1),\forall j=1,\cdots,|\mathbb{X}_\text{sN}|\!-\!1$, and that $|\mathbb{X}_\text{sN}|\geq N$ holds whatever policy is applied. 
Thus, the following assumption is not restrictive.

\begin{assumption}\label{as:two target equal}
A stationary policy that minimizes $\mathrm{E}[\tilde{v}_\text{sN}(\mathbf{x})]$ for each system state $\mathrm{x}\in\mathbb{Y}$, also minimizes $J(f,\mathbf{x})$.
\end{assumption}

\IncMargin{1em}
\begin{algorithm} 
	\caption{SN Policy Algorithm}\label{ag:whole}
	\DontPrintSemicolon
	\SetKwData{Left}{left}\SetKwData{This}{this}\SetKwData{Up}{up}
	\SetKwFunction{Union}{Union}\SetKwFunction{FindCompress}{FindCompress}
	\SetKwInOut{Input}{input}\SetKwInOut{Output}{output}
	\Input{$N$, $\theta$, $\tau_1,\cdots,\tau_N$,   $b_1,\cdots,b_N$. }
	\Output{Policy $g(\mathbf{x}), \forall \mathbf{x}\in \mathbb{Y}$.}
	\BlankLine
	$\mathbb{Y}_0=\{\mathbf{x}: \exists A>0,  \min_{\pi}\left(\mathrm{E}[\tilde{v}_\text{sN}(\mathbf{x})]\right)=1+A + o(1)\};$ \label{ag:line 1}
	\;
	\ForEach{$\mathbf{x}\in\mathbb{Y}_0$}
	{$A(\mathbf{x})$ is as in Step \ref{ag:line 1};
		\;
		$g(\mathbf{x})\leftarrow \arg\min_{n=1}^N A(\tilde{\mathcal{S}}_n(\mathbf{x}))$	;}
	$\mathbb{Z}\leftarrow \emptyset$;
	$\mathbb{Y}_\text{remain}\leftarrow \emptyset$\;
	\ForEach {$k=1$ to $(\min_{n=1}^N \tau_n)$}
	{$\mathbb{Z}\leftarrow \mathbb{Z}\cup \mathbb{Y}_{k-1}$;\;
		$\mathbb{Y}_k\leftarrow\{\mathbf{x}:\mathbf{x+1}\in \mathbb{Y}_{k-1} \text{ and } \mathbf{x} \notin \mathbb{Z} \}$;\;
		\lRepeat{$Y_k$ not extend;}{$\mathbb{Y}_k\leftarrow \mathbb{Y}_k \cup \{\mathbf{x}: \tilde{\mathcal{S}}_n(\mathbf{x})\in \mathbb{Z}\cup \mathbb{Y}_k,\forall n\text{ and } \mathbf{x}\notin \mathbb{Z}\cup \mathbb{Y}_k \} $}}
	\ForEach {$k=1$ to $(\min_{n=1}^N \tau_n)$}
	{ $\mathbb{Y}_k^\prime\leftarrow \mathbb{Y}_k;$\;
		\Repeat{$\mathbb{Y}_k^\prime=\emptyset$ or $\mathbb{Y}_k^\prime=\mathbb{Y}_k^{\prime\prime};$}
		{
			$\mathbb{Y}_k^{\prime\prime}\leftarrow \mathbb{Y}_k^\prime;$ 
			$\mathbb{Y}_k^\prime\leftarrow \emptyset;$
			\;	
			\ForEach {$\mathbf{x}\in \mathbb{Y}_k^{\prime\prime}$}
			{$m\leftarrow \max \{j: \exists n, \tilde{\mathcal{S}}_n(\mathbf{x})\in \mathbb{Y}_j \}$;\;
				$U_\text{set}\leftarrow \{n: \tilde{\mathcal{S}}_n(\mathbf{x})\in \mathbb{Y}_m \}$;\;
				\If{m>k}{$[A\!(\mathbf{x}),\!g(\mathbf{x})]\!\leftarrow\! \min_{n\in U_\text{set}}\!\! b_n A((\mathbf{x}\!+\!\boldsymbol{1})\!\wedge \!\boldsymbol{\tau})$;  }
				\ElseIf{$\exists n\!\in\! U_\text{set}$,  $A(\tilde{S}_n(\mathbf{X}))$ not yet}{$\mathbb{Y}_k^\prime\leftarrow \mathbb{Y}_k^\prime \cup \mathbf{x};$}
				\Else{
					$[A(\mathbf{x}),g(\mathbf{x})] \leftarrow \min_{n\in U_\text{set}}A(\tilde{\mathcal{S}}_n(\mathbf{x})) +b_n A((\mathbf{x}+\boldsymbol{1})\wedge \boldsymbol{\tau})\mathds{1}\!\{(\mathbf{x+1})\!\wedge\!\boldsymbol{\tau}\!\in\!\mathbb{Y}_{k\!-\!1} \}$;}
			}
		} 
		\If{$\mathbb{Y}_k^\prime\neq \emptyset$}
		{
			$\mathbb{Y}_\text{remain}\leftarrow\mathbb{Y}_\text{remain}\cup\mathbb{Y}_k^\prime;$
			$B(\mathbf{x})\leftarrow0,\forall \mathbf{x}\in \mathbb{Y}_k;$
			\;
			\ForEach{$n=1$ to $N-1$}
			{
				\ForEach{$\mathbf{x}\in \mathbb{Y}_k$}
				{ 	
					$U_\text{set}^\prime\leftarrow g(\mathbf{x}) \text{ or } U_\text{set}$;\; 
					$B(\mathbf{x})\leftarrow\min_{n\in {U_\text{set}^\prime}} B(\tilde{S}_n(\mathbf{x}))+ b_n A((\mathbf{x+1})\!\wedge\!\boldsymbol{\tau}) \mathds{1}\!\{(\mathbf{x+1})\!\wedge\!\boldsymbol{\tau}\in\mathbb{Y}_{k-1} \}$
				}
			}
			
			\ForEach{$\mathbf{x}\in \mathbb{Y}_k^\prime$}
			{
				$[A(\mathbf{x}),g(\mathbf{x})] \leftarrow \min_{n\in U_\text{set}^\prime}B(\tilde{\mathcal{S}}_n(\mathbf{x})) +b_n A((\mathbf{x}+\boldsymbol{1})\wedge \boldsymbol{\tau})\mathds{1}\!\{(\mathbf{x+1})\!\wedge\!\boldsymbol{\tau}\!\in\!\mathbb{Y}_{k\!-\!1} \}$
			}
			
		}
	}
\end{algorithm}
\IncMargin{1em}

Consequently, we design Algorithm  \ref{ag:whole} above to obtain a stationary policy, denoted  \textit{SN policy}, which tends to minimizes $\mathrm{E}[\tilde{v}_\text{sN}(\mathbf{x})]$ for any system state $\mathbf{x}\in \mathbb{Y}$. 
Here, 
\begin{align*}
\tilde{\mathcal{S}}_n(\mathbf{x}\!):=(x_1\!+\!1,\cdots,x_{n\!-\!1}\!+\!1,0,x_{n\!+\!1}\!+\!1,\cdots \!,x_N\!+\!1)\wedge \boldsymbol{\tau},  
\end{align*}
i.e., the state that succeeds the state $\mathbf{x}$ in the event of a successful transmission for client $n$ in the MDP-2. 
This algorithm divides the state space into sets, \\ $\mathbb{Y}_0,\mathbb{Y}_1,\cdots,\mathbb{Y}_{\min_{n=1}^N\{\tau_n\}}$, with,
\begin{align}\label{eq:algo explain}
\mathbb{Y}_k\!:=\!\{\mathbf{x}: \exists A>0, \min_{\pi}\left(\mathrm{E}[\tilde{v}_\text{sN}(\mathbf{x})]\right)\!=\!1\!+\!A \epsilon^k \!+\! o(\epsilon^k) \} ,
\end{align}
for $k=0,1,\cdots, \min_{n=1}^N\{\tau_n\}$.
Then it obtains or approximates the $A$ in \eqref{eq:algo explain} for each system state $\mathbf{x}$ (denoted $A(\mathbf{x})$), and decides the optimal control based on $A(\mathbf{x}) $.

\begin{theorem}\label{th:SN theorem}
When Assumption \ref{as:two target equal} holds, and $\mathbb{Y}_\text{remain}$ in Algorithm \ref{ag:whole} is empty, then the SN policy is optimal in the high reliability asymptotic regime.
\end{theorem}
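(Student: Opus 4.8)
The plan is to reduce the theorem, through Assumption~\ref{as:two target equal} and Lemma~\ref{th:MC lemma}, to the single statement that the policy $g$ returned by Algorithm~\ref{ag:whole} attains, simultaneously for every $\mathbf{x}\in\mathbb{Y}$, the leading order in $\epsilon$ of $\min_{f}\mathrm{E}_{f}[\tilde{v}_\text{sN}(\mathbf{x})]$. Once this is shown, Assumption~\ref{as:two target equal} gives that $g$ also minimizes $J(f,\mathbf{x})$ for every $\mathbf{x}$; and since Lemma~\ref{th:MC lemma} together with the identity $\mathcal{S}^{j}(\mathbb{X}_\text{sN}(1))=\mathbb{X}_\text{sN}(j+1)$ expresses the leading $\epsilon$-order of $J(f,\mathbf{x})$ through the quantities $\mathrm{E}[v_\text{sN}(\cdot)]-1$ along the all-success cycle (see \eqref{eq:MC risk sensitive cost form}), matching these at leading order is exactly the assertion ``optimal in the high reliability asymptotic regime''.

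The core step is to derive an \emph{asymptotic one-step optimality recursion} for $c(\mathbf{x}):=\min_{f}\!\big(\mathrm{E}_{f}[\tilde{v}_\text{sN}(\mathbf{x})]-1\big)$. Conditioning on the first transmission under a control $u$ (success, probability $1-b_u\epsilon$, to $\tilde{\mathcal{S}}_u(\mathbf{x})$; failure, probability $b_u\epsilon$, to $(\mathbf{x}+\mathbf{1})\wedge\boldsymbol{\tau}$) and invoking \eqref{eq:MC tilta form}, the estimate \eqref{eq:ss temp 1}, and the ``discard events of strictly higher $\epsilon$-order'' device used throughout Section~\ref{se:SS point}, one obtains, writing $\operatorname{ord}(\mathbf{x})$ for the exponent $k$ with $c(\mathbf{x})=\Theta(\epsilon^{k})$ (equivalently $\mathbf{x}\in\mathbb{Y}_k$, cf.\ \eqref{eq:algo explain}), that $\operatorname{ord}(\mathbf{x})=\min\!\big(\max_{n}\operatorname{ord}(\tilde{\mathcal{S}}_n(\mathbf{x})),\,1+\operatorname{ord}((\mathbf{x}+\mathbf{1})\wedge\boldsymbol{\tau})\big)$ and that the leading coefficient $A(\mathbf{x})$ of $c(\mathbf{x})$ is a minimum, over the clients attaining $\max_{n}\operatorname{ord}(\tilde{\mathcal{S}}_n(\mathbf{x}))$, of $A(\tilde{\mathcal{S}}_n(\mathbf{x}))$ plus $b_n$ times the coefficient of the failure successor when that successor lies one order below $\mathbf{x}$ (and nothing otherwise). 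This is precisely the rule implemented by Algorithm~\ref{ag:whole}: the seeding $\mathbb{Y}_k\supseteq\{\mathbf{x}:\mathbf{x}+\mathbf{1}\in\mathbb{Y}_{k-1}\}$ records the ``one forced failure into an order-$(k-1)$ state'' branch, the ``repeat'' closure records the ``every success stays of order $\ge k$'' branch, and the $m>k$ / $m=k$ dichotomy in the coefficient pass is the decoupled / coupled case of the recursion.

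I would then prove, by induction on $k$, first that the sets $\mathbb{Y}_0,\dots,\mathbb{Y}_{\min_{n}\tau_{n}}$ produced by the algorithm coincide with the strata $\{\mathbf{x}:\operatorname{ord}(\mathbf{x})=k\}$, and then — ordering states by stratum and, within a stratum, by the order in which the repeat-loop resolves them — that $A(\mathbf{x})$ equals the true leading coefficient of $c(\mathbf{x})$ and $g(\mathbf{x})$ records a control attaining the minimum in the recursion at $\mathbf{x}$. When $\operatorname{ord}(\tilde{\mathcal{S}}_n(\mathbf{x}))=k$ for some $n$ (the $m=k$ case) the recursion couples states inside $\mathbb{Y}_k$; the inner repeat-loop discharges $\mathbf{x}$ as soon as all same-stratum $A(\tilde{\mathcal{S}}_n(\mathbf{x}))$ it needs are available, so the states it cannot discharge are exactly those lying on a cycle of same-order ``tight'' success-edges, and these are precisely what the algorithm places into $\mathbb{Y}_\text{remain}$ (the auxiliary $B(\cdot)$ pass being the partial substitute used there). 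Hence, under the hypothesis $\mathbb{Y}_\text{remain}=\emptyset$, every $A(\mathbf{x})$ and $g(\mathbf{x})$ is computed exactly and $g$ attains the recursion's minimum at every state; a verification argument — substituting the now-deterministic policy $g$ and checking, again through \eqref{eq:MC tilta form}, that its value satisfies the same recursion — then yields $\mathrm{E}_{g}[\tilde{v}_\text{sN}(\mathbf{x})]-1=c(\mathbf{x})$ for all $\mathbf{x}$ to leading order, completing the reduction.

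I expect the main obstacle to be making the asymptotic optimality recursion genuinely rigorous: one must justify interchanging $\min_{f}$ with the one-step conditioning, control the cumulative $o(\cdot)$ error over the random (exponentially-tailed) length of an sN-period, and — the subtle point — show that the discrepancy between the cost to the next length-$N$ success run counted from run-length $0$ versus from run-length $1$ is of strictly higher $\epsilon$-order than $c(\mathbf{x})$, so that the recursion can legitimately be closed in the variables $c(\cdot)$ alone. Alongside this sits a purely combinatorial claim to be verified, namely that the repeat-loop of Algorithm~\ref{ag:whole} fails to resolve a same-order state if and only if that state is reachable along a cycle of same-order tight success-edges, so that ``$\mathbb{Y}_\text{remain}=\emptyset$'' is exactly the condition under which the leading-order recursion admits the unique forward solution the algorithm returns.
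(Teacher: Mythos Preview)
Your proposal is correct in its overall architecture, but it is vastly more ambitious than what the paper actually does. The paper's entire ``proof'' of Theorem~\ref{th:SN theorem} is a single sentence after the theorem statement: ``Theorem~\ref{th:SN theorem} directly follows from Assumption~\ref{as:two target equal} and the design of Algorithm~\ref{ag:whole}.'' There is no formal proof environment, no recursion derivation, no induction on strata, no verification argument. The paper treats the claim that Algorithm~\ref{ag:whole} minimizes $\mathrm{E}[\tilde{v}_\text{sN}(\mathbf{x})]$ as essentially self-evident from the algorithm's stated design intent (the text preceding the theorem says the algorithm ``tends to minimizes $\mathrm{E}[\tilde{v}_\text{sN}(\mathbf{x})]$'' and ``obtains or approximates the $A$ in \eqref{eq:algo explain}''), and then simply invokes Assumption~\ref{as:two target equal} to pass from per-state optimality of $\mathrm{E}[\tilde{v}_\text{sN}(\cdot)]$ to optimality of $J(\cdot,\cdot)$.

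What you have written is, in effect, the proof the paper chose not to supply. Your one-step asymptotic recursion for the leading coefficient, the identification of the strata $\mathbb{Y}_k$ with the order sets, the reading of the $m>k$ versus $m=k$ branches as decoupled versus coupled recursion, and the interpretation of $\mathbb{Y}_\text{remain}=\emptyset$ as the absence of same-order tight success-cycles are all the natural and correct way to make the informal phrase ``the design of Algorithm~\ref{ag:whole}'' into an actual argument. The obstacles you flag in your final paragraph --- controlling cumulative $o(\cdot)$ error over a random-length period, justifying the interchange of $\min_f$ with one-step conditioning at leading order, and the combinatorial characterization of $\mathbb{Y}_\text{remain}$ --- are genuine technical gaps that the paper simply does not engage with; it leaves them implicit in the word ``design''. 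So your route is not a different proof of the same theorem so much as a serious attempt to prove what the paper asserts without proof.
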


Theorem \ref{th:SN theorem} directly follows from Assumption \ref{as:two target equal} and the design of Algorithm \ref{ag:whole}.\footnote{Note that the Algorithm \ref{ag:whole} can be further improved by using $\tilde{S}(\tilde{S}(\mathbf{x}))$ in Step 19-22.	}  The example in Fig. \ref{fg:case_3} illustrates the simulation for a multi-client system when these asymptotic conditions are satisfied.

\section{Simulations}
We now present the results of a simulation study comparing several wireless scheduling policies with respect to their risk-sensitive average costs. We present the results for the scenarios with clients requiring different inter-delivery thresholds and under heterogeneous channel reliabilities.

The wireless scheduling policies implemented include the optimal policy (OP) obtained from Theorem \ref{th:optimality equation}, \quad the modified-least-time-to-go (MLG) policy (for two-client scenario) proposed in Section \ref{se:2-client scenario}, and 
the SN policy proposed in Section \ref{se:multi-client subsection}. Also, two other heuristic policies are compared: 
the packet-level round-robin policy (PRR),  and the largest-weighted-delivery-debt (WDD) policy, which serves the client with the largest weighted delivery debt, where:
\begin{align*}
\textrm{Delivery Debt}_n=\frac{t}{p_n \tau_n}-\frac{M_t^{(n)}}{p_n}.
\end{align*}
(Recall $M_t^{(n)}$ is the number of packets delivered for the $n$-th client by time $t$, as in \eqref{eq:cost define origninal}.)
The WDD policy has been known to be ``timely-throughput" optimal (see \cite{Hou2009} for discussion).


Fig. \ref{fg:case_1} shows the costs incurred by these four wireless scheduling policies for different risk-sensitive parameters. It can be seen that the optimal policy always outperforms all the other policies. 

Fig. \ref{fg:case_2b} compares the scheduling policies under different channel reliabilities in the two-client scenario. It can be seen that even when the channel reliability  probabilities are only moderate, e.g., $p_1=0.6$ and $p_2=0.8$, the MLG policy still achieves almost the optimal cost, and outperforms all other greedy policies. 

Fig. \ref{fg:case_3} compares the scheduling policies in a  multi-client scenario. It can be seen that even when the channel reliability  probabilities are only moderate, e.g. $0.8$, SN policy still approximates the optimal cost, and outperforms all other greedy policies. 
Here, we also employ the periodic scheduling (PS) policy\cite{Bar-Noy1998}, which is optimal when the failure probabilities are exactly zero. 
It can be seen that the PS policy performs extremely poorly even when the failure probability is very small, e.g., $0.01$, since it gives rise to open-loop policies.
In contrast, the high-reliability asymptotic approach proposed for the scenario with sufficiently-small failure probability provides a well performing closed-loop scheduling policy.
This confirms the value of the high-reliability asymptotic approach.

\begin{figure}[!t]
	\centering
	\includegraphics[width=0.5\textwidth]{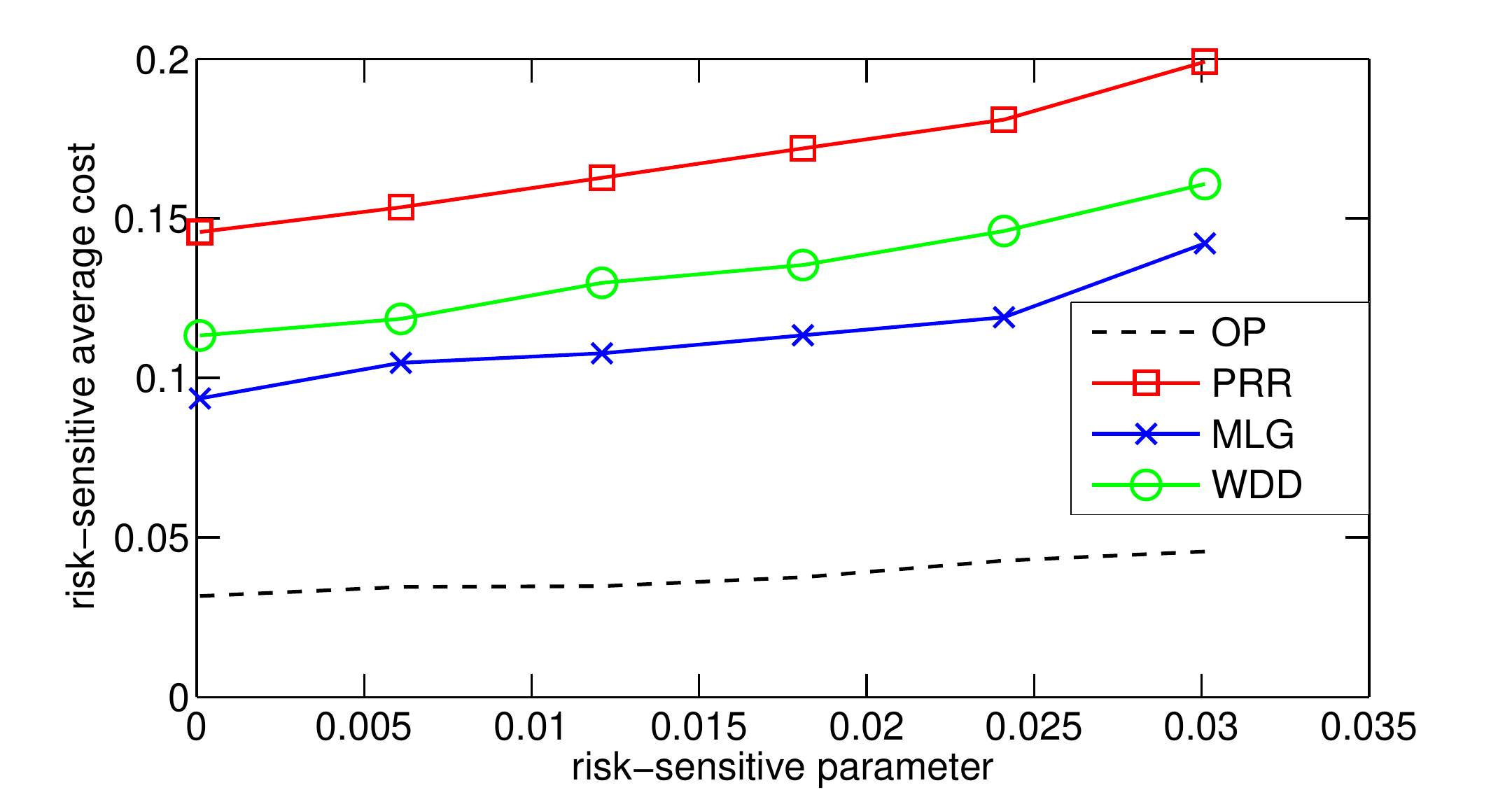}
	\caption{The risk-sensitive average cost vs. the risk-sensitive parameter $\theta$ for different wireless scheduling policies is shown. (The parameters are $N=2$, $p_1=0.4$, $p_2=0.1$, $\tau_1=20$, $\tau_2=40$).}
	\label{fg:case_1}
\end{figure}

\begin{figure}[!t]
	\centering
	\includegraphics[width=0.5\textwidth]{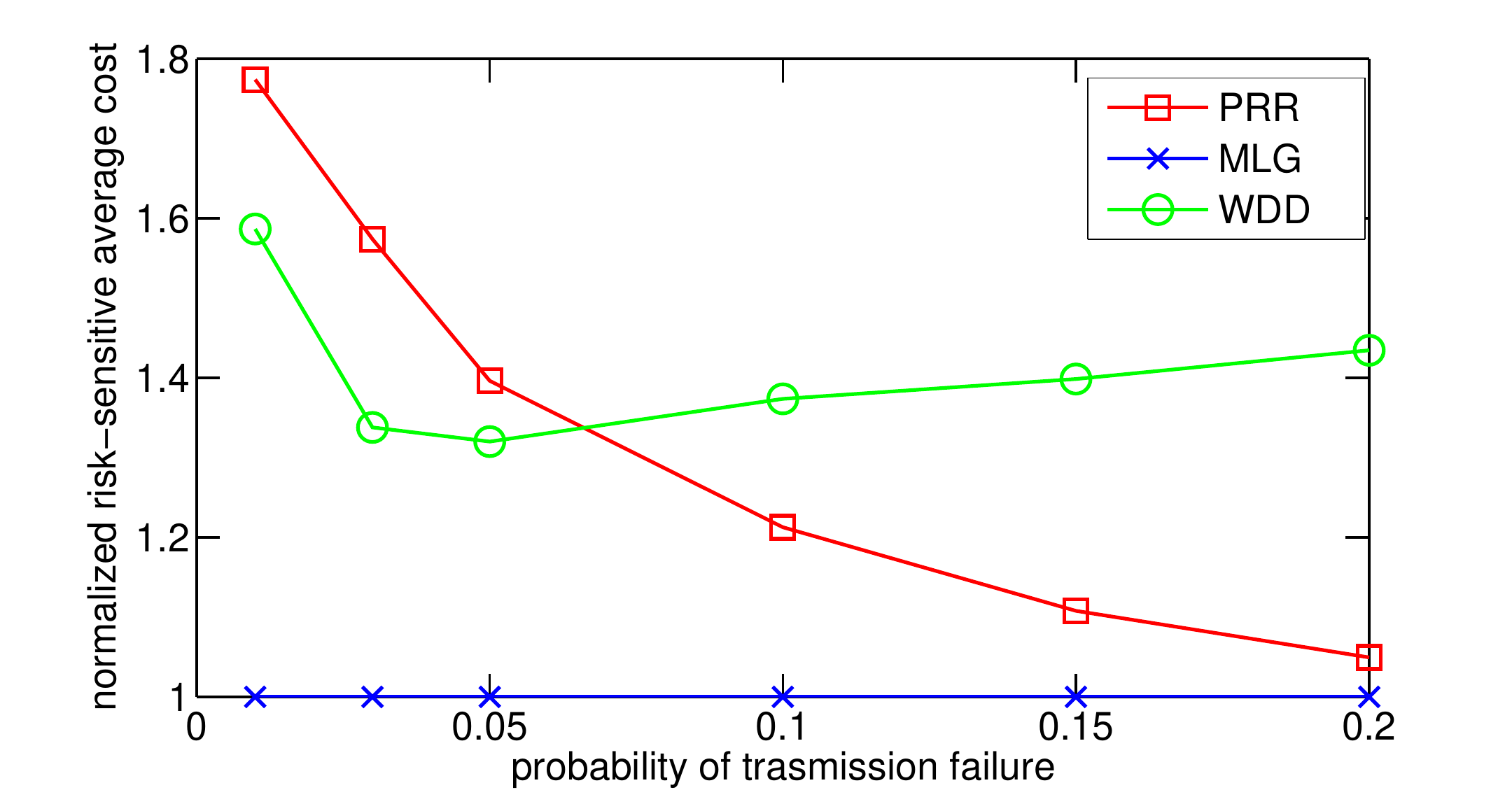}
	\caption{In two-client scenario, the normalized risk-sensitive average cost (normalized by the cost of the optimal policy) vs. the failure transmission parameter $\epsilon$. ($p_1=1-2\epsilon$, $p_2=1-\epsilon$, $\tau_1=3$, $\tau_2=5$, $\theta=0.01$.)}
	\label{fg:case_2b}
\end{figure}

\begin{figure}[!t]
	\centering
	\subfigure[Performance of the PRR, WDD, and SN policies] {\includegraphics[width=0.5\textwidth]{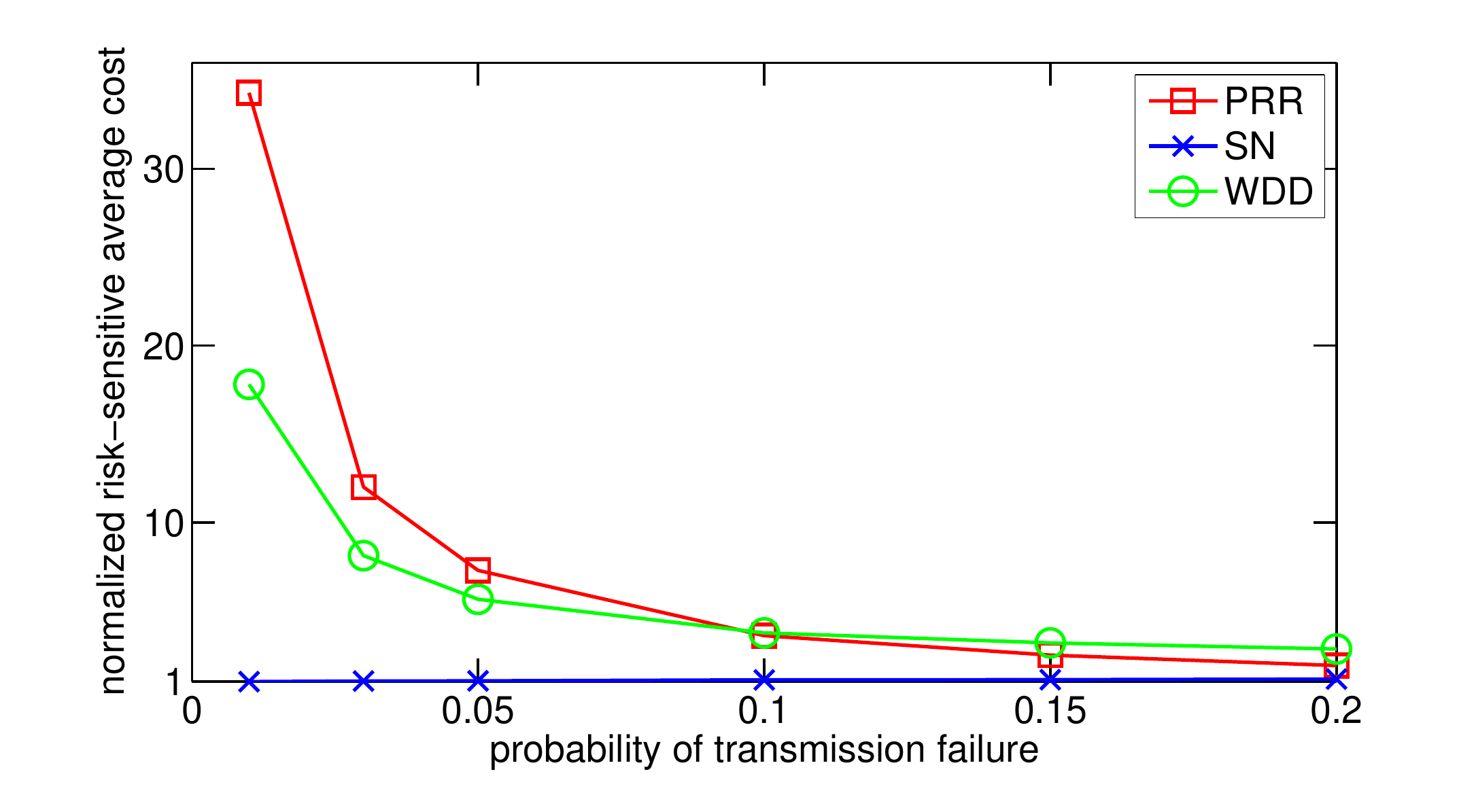}}
	\subfigure[Performance of Periodic Scheduling] {\includegraphics[width=0.5\textwidth]{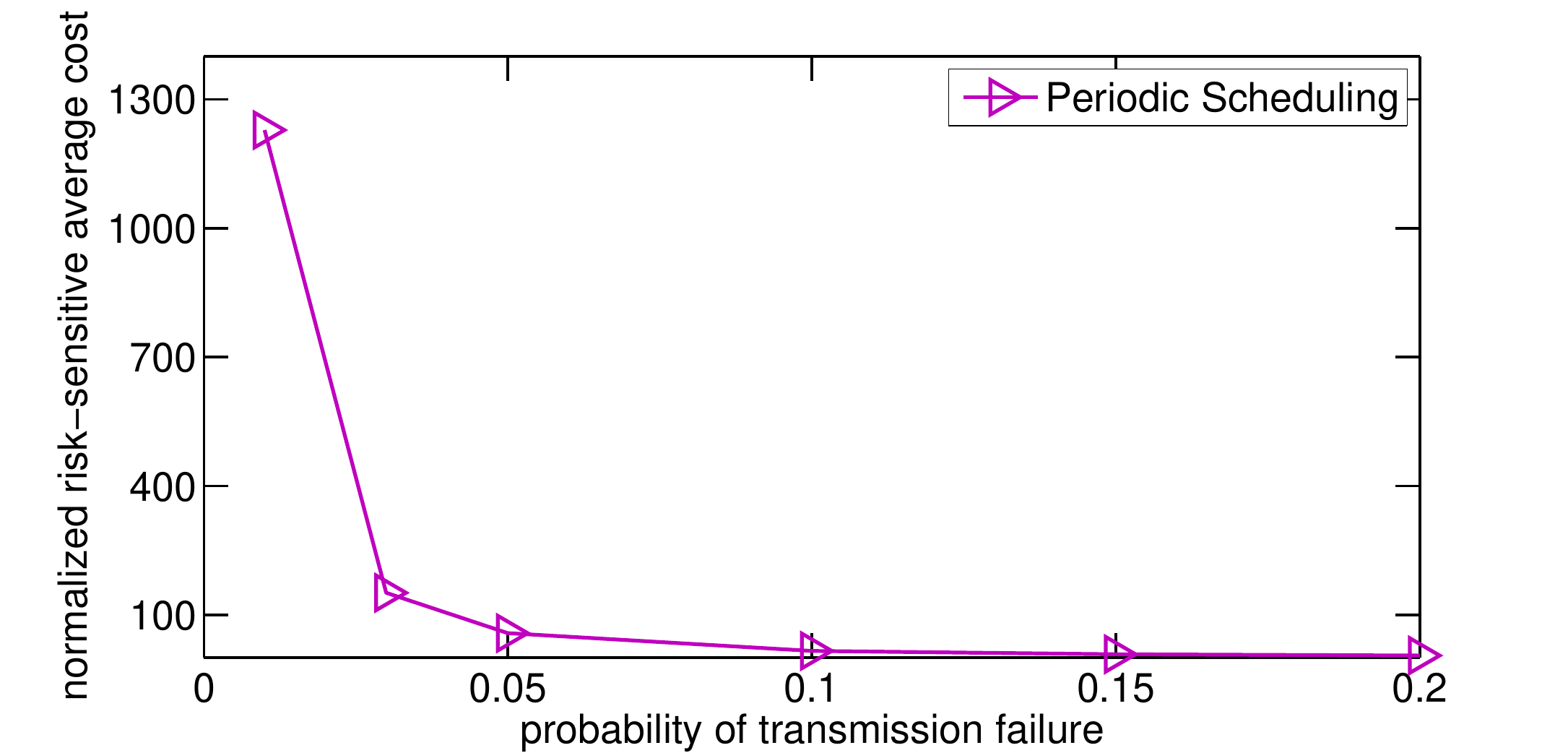}}
	\caption{In a multi-client scenario, the normalized risk-sensitive average cost (normalized by the cost of the optimal policy) vs. the failure transmission parameter $\epsilon$ is shown. (The parameters are $N=3$, $p_1=p_2=p_3=1-\epsilon$, $\tau_1=4$, $\tau_2=6$, $\tau_3=8$, $\theta=0.05$.)}
	\label{fg:case_3}
\end{figure}

\section{Conclusions}
In this paper we have addressed the issue of designing scheduling policies in order to support inter-delivery requirements of wireless clients in cyber-physical systems.
A novel risk-sensitive approach has been employed to penalize the ``exceedance" over the allowable thresholds of inter-delivery times.

The resulting MDP that involves infinitely many states can be reduced to an equivalent MDP which involves only a finite number of states, thus showing that
a stationary optimal  policy exists when the risk-sensitive parameter $\theta$ is sufficient small. Based on this, we have designed a finite time algorithm to obtain the optimal policy. 

To address the curse of dimensionality from MDP approach, we proposed a high-reliability asymptotic approach, and derived optimal policy for two-client scenario in the high-reliability regime. Further, we have designed an SN policy for a general number of clients based on our analysis result.
The simulation results show that the proposed policies provide near-optimal performance even for moderately large values of the failure probabilities, justifying the approach.

\section*{Acknowledgments}
This work is sponsored
in part by the National Basic Research Program of China
(2012CB316001), the Nature Science Foundation of China
(61201191, 61401250, 61321061, 61461136004), NSF under Contract Nos.
CNS-1302182 and CCF-0939370, AFOSR under Contract No.
FA-9550-13-1-0008, and Hitachi Ltd.

\bibliographystyle{abbrv}
\bibliography{scheduleproblem}  

\appendix

\section{Non-exclusionary Policies are Not Restrictive}\label{se:policies of interest}
We recall that Non-Exclusionary (NE) policies are those stationary policies which do not serve a client $n$ when the system state $\mathbf{x}$ is $(\tau_1,\cdots,\tau_{n-1},0,\tau_{n+1},\cdots,\tau_N)$.
We next show that for a non-NE, either there is an NE policy out-performing it, or it is trivial to derive its cost function.

For a client $n$, denote  $\mathbf{x}^{(na)}$ as the system state \\ $(\tau_1,\cdots,\tau_{n-1},a,\tau_{n+1},\cdots,\tau_N)$ for integer $a\in[0,\tau_n]$. Similarly, for $n\neq l$, denote $\mathbf{x}^{(na_1, la_2)}$ as the state $\mathbf{x}$ such that $x_n=a_1$, $x_l=a_2$, and  $x_j=\tau_j,\forall j\neq n,l$.
In a $T$-horizon MDP-$2$ problem, denote by $\pi$ the policy which transmits client $n$ in the first time slot, and then follows the optimal policy. Similarly, $\tilde{\pi}$ as the policy which transmits client $l$ in the first time slot and then follows the optimal policy. Then $V^{\pi}_T(\mathbf{x}),V^{\tilde{\pi}}_T(\mathbf{x})$ are the costs associated with these two policies with any initial state $\mathbf{x}$, respectively. 

\begin{app_lemma}\label{th:NE lemma1}
If there exists a client $l$ such that $p_l>p_n$, then for each time slot $T$, we have  $V^{\tilde{\pi}}_T(\mathbf{x}^{(n0)})\leq V^{\pi}_T(\mathbf{x}^{(n0)})$. That is, if serving client $n$ in state $\mathbf{x}^{(n0)}$ is optimal,  serving client $l$ is also optimal.
\end{app_lemma}

\begin{app_lemma}\label{th:NE lemma2}
When $p_n=\min_l{p_l}$,
if the optimal action in state $\mathbf{x}^{(n0)}$ with $T$ time slots to go is to serve client $n$, then the optimal action in state $\mathbf{x}^{(na)},\forall a\in\{1,\cdots,\tau_n \}$ with $T$ time slots to go is also to serve client $n$.
\end{app_lemma}

The proofs for Lemma \ref{th:NE lemma1} and Lemma \ref{th:NE lemma2} are omitted due to space constraints.

Combining Lemma \ref{th:NE lemma1} and Lemma \ref{th:NE lemma2}, for a non-NE stationary policy $f$ which serves client $n$ in state $\mathbf{x}^{(n0)}$, either there exists an NE policy which out-performs $f$ (as in Lemma \ref{th:NE lemma1}), or $f$ keeps serving client $n$ (after hitting the state $\boldsymbol{\tau}$) following Lemma \ref{th:NE lemma2} and therefore has a trivially computable cost.

\end{document}